\documentclass[journal]{IEEEtran}

\usepackage{amsmath,amssymb,amsfonts}
\usepackage{amsthm}
\usepackage{array}
\usepackage{microtype}
\usepackage{algorithm}
\usepackage{algpseudocode}
\usepackage{booktabs}
\usepackage{float}
\floatstyle{ruled}
\restylefloat{algorithm}
\usepackage{cite}
\usepackage{graphicx}
\usepackage{multirow}
\usepackage[caption=false,font=footnotesize]{subfig}
\usepackage{textcomp}
\usepackage{stfloats}
\usepackage{tabularx}
\usepackage{url}
\usepackage[table]{xcolor}

\theoremstyle{definition}
\newtheorem{definition}{Definition}
\newtheorem{assumption}{Assumption}
\theoremstyle{remark}
\newtheorem{remark}{Remark}
\theoremstyle{plain}
\newtheorem{proposition}{Proposition}
\newtheorem{lemma}{Lemma}
\newtheorem{corollary}{Corollary}
\newcolumntype{Y}{>{\raggedright\arraybackslash}X}
\newcolumntype{C}{>{\centering\arraybackslash}X}

\newcommand{\TableStyle}{%
    \scriptsize
    \renewcommand{\arraystretch}{1.12}%
    \setlength{\tabcolsep}{4.0pt}%
}
\newcommand{\SingleTableWidth}{0.94\columnwidth}
\newcommand{\DoubleTableWidth}{0.985\textwidth}
\newcommand{\TableSidePad}{3pt}

\newcommand{\TableNote}[1]{%
    \par\vspace{2pt}%
    {\noindent\footnotesize #1\par}%
    \vspace{1pt}%
}

\graphicspath{{fig/}{./}}

\begin{document}

\title{Channel Knowledge Map Reconstruction From Sparse Measurements via Pilot-Anchored Layout-Conditioned Fourier Refinement}

\author{Zhonghao~Jiu, Fan~Meng, Yongming~Huang,~\IEEEmembership{Fellow,~IEEE}, Hang~Zhan, Zening~Liu, Xiaohu~You,~\IEEEmembership{Fellow,~IEEE}
    \thanks{This work was supported in part by the National Natural Science Foundation of China under Grant Nos. 62225107 and 62201394, the Mobile Information Networks National Science and Technology Major Project under Grant 2025ZD1305000, the Natural Science Foundation on Frontier Leading Technology Basic Research Project of Jiangsu under Grant BK20222001, and the Fundamental Research Funds for the Central Universities under Grant 2242022k60002. (Corresponding author: Yongming Huang)}
    \thanks{Z. Jiu, Y. Huang, and X. You are with the School of Information Science and Engineering, Southeast University, Nanjing 210096, China. F. Meng, Y. Huang, H. Zhan, Z. Liu, and X. You are also with Purple Mountain Laboratories, Nanjing 211111, China (e-mail: 230228222@seu.edu.cn; mengfan@pmlabs.com.cn; huangym@seu.edu.cn; zhanxing@pmlabs.com.cn; liuzening@pmlabs.com.cn; xhyu@seu.edu.cn).}
}

\markboth{submitted to IEEE Transactions on Wireless Communications}%
{Jiu \MakeLowercase{\textit{et al.}}: CKM Reconstruction From Sparse Measurements}

\maketitle
\begin{abstract}

Channel knowledge maps (CKMs) enable environment-aware wireless systems by providing location-specific channel knowledge, but long-term environmental variations, such as construction, traffic redistribution, and foliage changes, require periodic map refresh. In practice, channel measurements are often sparse and irregular, while environmental knowledge may be limited to coarse layout or topology descriptors. This paper studies CKM reconstruction from sparse measurements. We show that reconstruction pipelines that apply local aggregation or spectral operators directly to a zero-filled pilot grid can entangle the sampling mask with the channel field, allowing structural priors to act on mask-induced distortions before the measurements define a supported radio field. To address this issue, we propose Anchor-CKM, a measurement-first, knowledge-aided reconstruction framework. Anchor-CKM first uses support-aware partial convolutions to construct a pilot-supported representation, and then performs layout-conditioned dual-path Fourier refinement followed by coordinate-based heteroscedastic prediction of the CKM mean and per-location predictive variance. Experiments on transmitter-disjoint DeepMIMO scenarios cover missing ratios from $0.3$ to $0.95$, including stringent $5\%$--$10\%$ pilot-coverage settings. In explicit-layout outdoor scenarios, Anchor-CKM reduces received-power root-mean-square error (RMSE) by $0.79$--$1.33$~dB relative to the strongest reproduced baseline, while ablations identify pilot-support stabilization as the largest contributor and layout conditioning as beneficial for line-of-sight/non-line-of-sight (LOS/NLOS) boundary fidelity.

\end{abstract}

\begin{IEEEkeywords}
Channel knowledge maps, sparse measurements, radio map reconstruction, neural operators, predictive variance.
\end{IEEEkeywords}

\section{Introduction}
\label{sec:intro}

\IEEEPARstart{S}{ixth}-generation (6G) wireless networks are evolving from connectivity-centric systems toward infrastructures characterized by ubiquitous intelligence and sustainability. Beyond higher data rates, lower latency, and broader coverage, 6G is expected to integrate communication, sensing, artificial intelligence (AI), and digital twins to support real-time and controllable network optimization~\cite{you2025ai6g,li2026pemnet}. This shift requires the network to anticipate propagation conditions before each link is explicitly sounded, so that coverage management, handover, scheduling, and beamforming can exploit predicted channel conditions instead of relying solely on instantaneous per-link feedback.

Channel knowledge maps (CKMs)~\cite{zeng2024ckm_tutorial,lim2021beam_tracking,du2026ckm_dualdomain} provide a compact form of location-specific channel knowledge for this purpose. A CKM records large-scale propagation attributes over a geographic area, including received power, dominant angles of arrival, delay spread, and blockage statistics, and can be queried at candidate receiver positions. Unlike instantaneous channel state information (CSI), a CKM captures slowly varying, location-indexed propagation knowledge that can be reused across network decisions and digital-twin-assisted optimization. Once available, CKMs can support predictive beam management, proactive handover, coverage-hole identification, cell association, and network planning~\cite{zeng2024ckm_tutorial,survey2025radio}. Their practical value therefore depends not only on initial construction accuracy, but also on whether they can be maintained as the environment changes.

A CKM cannot be built once and used indefinitely. New construction, seasonal foliage, traffic-pattern changes, and deployment updates alter the effective radio environment, while each transmitter, carrier, or propagation state may require refreshed observations. Exhaustive drive tests, aerial surveys, dense pilot sweeps, and repeated high-fidelity simulations can track such changes, but their labor, spectrum, and coordination costs grow with the deployment area. The 3GPP minimization-of-drive-tests (MDT) initiative alleviates this burden by leveraging user-equipment reports~\cite{3gpp37320, hapsari2012mdt}; however, these opportunistic measurements are confined to service areas and inherit the spatial bias of user traffic.

Consequently, practical CKM refresh must reconstruct or update the map from sparse and irregular spatial measurements rather than idealized dense surveys. We model this setting by discretizing the valid probing region into spatial cells onto which scheduled pilots, mobile-user feedback, and crowd-sourced reports are aggregated, with a binary sampling mask recording which cells are probed. The estimator also has access to quasi-static environmental knowledge, such as coarse layout or topology descriptors, which evolves more slowly than the pilot measurements. The central question is how to exploit this slow prior without allowing it to overrule the limited current measurements.

\begin{figure}[t]
	\centering
	\includegraphics[width=\columnwidth]{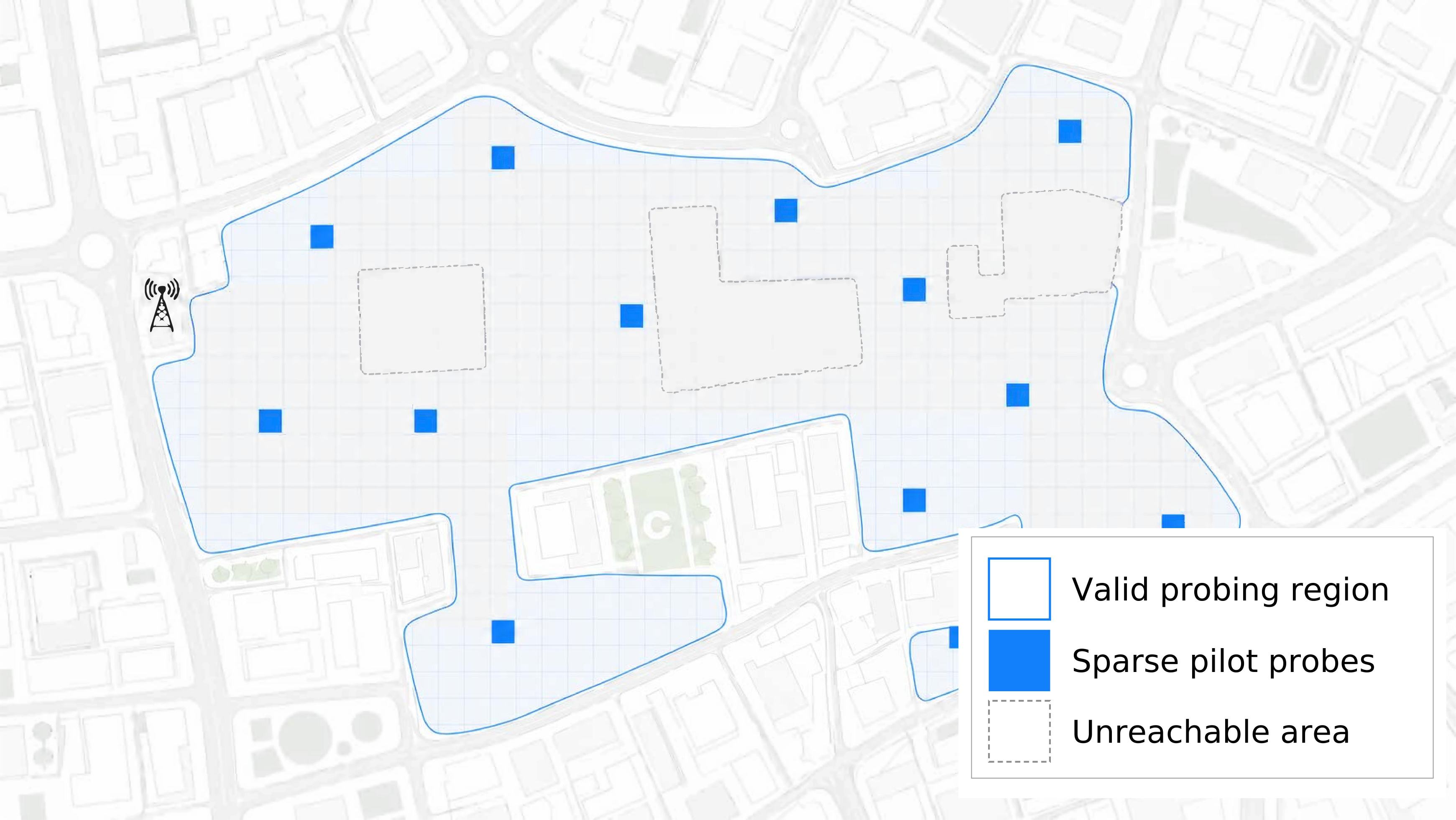}
	\caption{CKM refresh with sparse probing. Blue cells denote pilot probes; gray/dashed regions are excluded from the probing domain. Zeros at unprobed cells indicate absence of measurement, not physical channel values.}
	\label{fig:sparse_measurement_concept}
\end{figure}

\subsection{Related Work}
\label{sec:related}

Existing CKM and radio-map reconstruction methods primarily differ in the priors used to compensate for missing measurements. They broadly fall into four categories: classical spatial reconstruction, learned grid and graph completion, physics-aware operator and implicit field modeling, and reliability-aware reconstruction. We review each category with emphasis on how measurement evidence and structural priors are combined under sparse probing support.

\textit{Classical spatial reconstruction.} Geostatistical and model-based methods treat measured cells as samples from a spatial random field. Kriging~\cite{cressie1990origins}, inverse-distance weighting~\cite{shepard1968two}, low-rank matrix completion~\cite{candes2009exact}, and compressed sensing~\cite{candes2006robust} can be effective when the sampled support is sufficiently dense and the field is smooth or low-rank over the region of interest. When the measurement support becomes sparse, however, many unobserved cells may be far from any pilot, while local line-of-sight/non-line-of-sight (LOS/NLOS) transitions can violate stationarity or global-structure assumptions. The key risk is that the assumed regularizer, which is fixed before any probe is collected, dominates predictions wherever nearby measurements are scarce.

\textit{Learned grid and graph completion.} Learning-based methods replace explicit covariance models with learned spatial priors. RadioUNet~\cite{levie2021radiounet} and PMNet~\cite{lee2023pmnet} fuse a scene representation with a pilot-observation grid, whereas graph neural networks (GNNs) and attention/message-passing variants~\cite{he2023gnn_wireless,corso2020pna,hou2022graphmae,jiu2026rieif} model non-Euclidean measurement or geometry relations; RadioGAT~\cite{li2024radiogat} is a representative CKM-oriented graph model. These architectures are valuable for encoding layout and nonlocal relations. Under low pilot coverage, however, their earliest aggregation layers often operate on a masked grid or a grid-derived neighborhood graph, so sampling-induced distortions from the zero-filled grid can enter before a propagation field is recovered.

\textit{Physics-aware, operator, and implicit field modeling.} A stronger line of work constrains the reconstructed field itself. Diffusion-based radio-map synthesis~\cite{radiodiff2025} uses a learned generative prior under measurement constraints. Physics-informed neural networks (PINNs)~\cite{raissi2019pinn,liu2025pinn_gnn} encode propagation laws as soft constraints, although multi-objective PINN training can be unstable when the governing relation is empirical~\cite{wang2021ntk_pinn,xiao2024neural_ode}. Fourier neural operators (FNOs)~\cite{li2021fno,tran2023ffno,li2023gino} exploit low-mode structure in large-scale fields and are beginning to appear in wireless extrapolation and multiple-input multiple-output (MIMO) studies~\cite{xiao2025fno_mimo,gao2026channel_extrapolation}. Implicit representations such as sinusoidal representation networks (SIREN) and neural radiance fields (NeRF)~\cite{sitzmann2020siren,mildenhall2020nerf} provide continuous prediction over receiver coordinates. These tools strengthen the field prior, but they do not by themselves determine when that prior should act relative to sparse measurement evidence.

\textit{Reliability-aware reconstruction.} CKM refresh from sparse measurements also benefits from uncertainty or reliability reporting. Active-learning radio-map reconstruction uses uncertainty to select informative measurement locations when campaigns are costly~\cite{polyzos2024bayesian}. Such reporting steers where to measure next, not how priors and current measurements are ordered within the reconstructor. In this paper, predictive variance is used more narrowly as a deployment-side reliability indicator near coverage holes, LOS/NLOS boundaries, and scene edges, while the mean estimate remains constrained by the current sparse measurements.

In summary, prior work provides useful reconstruction mechanisms, but CKM refresh from sparse measurements exposes a common ordering vulnerability. Classical estimators can lean heavily on stationarity; learned models can process mask-induced distortions in early layers; and field-level priors can shape the estimate before the measurement support is stabilized.

\subsection{Contributions}
\label{sec:contributions}
The above comparison of sparse-reconstruction methods exposes a representation-level limitation in CKM refresh from sparse measurements. The issue is not simply whether the environmental or spectral prior is strong enough, but whether it is applied before the sparse measurements have formed a reliable radio-field representation. As illustrated in Fig.~\ref{fig:sparse_measurement_concept}, a zero-filled pilot grid contains physical measurements only at the probed cells, while zeros elsewhere denote absence of measurement rather than true channel values. Local aggregation or spectral mixing on this raw grid can therefore entangle the sampling mask with the channel field before a pilot-supported representation has formed. In the spectral view, spatial masking convolves the channel spectrum with a broadband mask spectrum, producing mask-induced spectral distortion. Under uniform random probing with missing ratio $\rho$, naive spectral debiasing inflates the operator-input variance by a factor of $\rho/(1-\rho)$, which grows rapidly as the probing budget shrinks. This factor is not a universal penalty for every estimator; it isolates one mechanism by which early aggregation or spectral processing on the raw masked grid can weaken the tie to current measurements.

We therefore develop Anchor-CKM, a measurement-first, knowledge-aided framework for CKM refresh from sparse measurements. Anchor-CKM adopts a staged reconstruction order: sparse pilots first form a supported radio-field representation, and only then are layout-conditioned spectral and local priors used for propagation-aware refinement. This ordering keeps the slow environmental prior useful without allowing it to replace the current sparse measurements. The main contributions are summarized as follows.
\begin{itemize}
	\item \textbf{CKM refresh from sparse measurements with mask-distortion analysis.}
	We formulate measurement-limited CKM refresh from sparse spatial measurements and coarse layout or topology side information. We show that a zero-filled pilot grid can be mask-dominated rather than a partially observed radio field: the sampling mask disrupts local neighborhoods and aliases with the channel spectrum. A second-moment analysis under random probing quantifies this effect through the variance-inflation factor $\rho/(1-\rho)$, providing an analytical rationale for stabilizing pilot support before imposing spectral or environmental priors.
	
	\item \textbf{A pilot-anchored layout-conditioned Fourier framework.}
	We propose Anchor-CKM, which implements this ordering through three components: a partial-convolution initializer that forms a pilot-supported field, a layout-conditioned dual-path Fourier operator for propagation-aware refinement, and a coordinate-based heteroscedastic prediction head that outputs both the reconstructed CKM and per-location predictive variance. This design separates fast pilot-driven refresh from slow environmental encoding.
	
	\item \textbf{Deployment-oriented evaluation.}
	We evaluate Anchor-CKM on transmitter-disjoint DeepMIMO benchmarks with shared sparse-probing masks and explicit layout or topology priors, across missing ratios from $0.3$ to $0.95$, including stringent $5\%$--$10\%$ pilot-coverage settings. At $\rho=0.9$, Anchor-CKM reduces received-power root-mean-square error (RMSE) by $1.33$ and $0.79$~dB on the explicit-layout O1-28 and O1-60 scenarios relative to the strongest reproduced baseline, markedly improves low-coverage detection and best-cell selection, and obtains the lowest RMSE on the indoor I1-2.5 portability test under a weaker indirect prior. Ablations and robustness checks identify pilot-support stabilization as the largest observed contributor, with layout conditioning improving LOS/NLOS boundary fidelity and predictive variance providing a secondary reliability cue near coverage-hole and noisy-pilot regions.
\end{itemize}

The remainder of this paper is organized as follows. Section~\ref{sec:system_model} defines CKM refresh from sparse measurements, including the valid probing domain, spatial mask, task objective, and direct-reconstruction limitation. Section~\ref{sec:method} presents Anchor-CKM and the analytical motivation for its staged ordering. Section~\ref{sec:experiments} reports the experimental evaluation, and Section~\ref{sec:conclusion} concludes the paper.

\section{System Model and Task Formulation}
\label{sec:system_model}

We first specify CKM refresh from sparse measurements independently of any reconstruction architecture: the CKM field, valid probing domain and mask, missing ratio, estimator information set, and expected-risk objective. We then characterize the limitation of direct reconstruction in this regime. Table~\ref{tab:symbols} lists the main symbols used in the formulation and method description.

\subsection{Channel Modeling}
\label{sec:channel_model}

Let $\Omega\subset\mathbb{R}^{2}$ denote a two-dimensional (2D) deployment region of interest discretized into a regular $H\times W$ grid $\mathcal{G}=(V,E)$ with $N=HW=|V|$ nodes. Each node $i\in V$ corresponds to a spatial cell centered at $\mathbf{x}_{i}\in\Omega$; the transmitter (TX) is located at $\mathbf{x}_{\mathrm{TX}}$.

The following large-scale model serves as a structural abstraction for CKM behavior; the ray-tracing experiments do not assume it as their data-generating process. Under this abstraction, the received power at cell $i$ is governed by distance-dependent path loss and shadow fading. Using the standard log-distance model~\cite{rappaport2002wireless},
\begin{equation}
    P_{\mathrm{LS}}(\mathbf{x}_{i})
    = P_{\mathrm{TX}} - 10\,n\log_{10}\!\left(\frac{\|\mathbf{x}_{i}-\mathbf{x}_{\mathrm{TX}}\|}{r_{0}}\right) + X_{\sigma}(\mathbf{x}_{i})
    \quad [\mathrm{dB}],
    \label{eq:path_loss}
\end{equation}
where $n$ is the path-loss exponent, $r_{0}$ is a reference distance, and $X_{\sigma}(\mathbf{x}_{i})\sim\mathcal{N}(0,\sigma_{\mathrm{SF}}^{2})$ is the shadow-fading term. Shadow fading is spatially correlated and is classically modeled by an exponentially decaying kernel in Gudmundson's formulation~\cite{gudmundson1991shadow}, so the large-scale field varies slowly over regions in which the propagation condition remains homogeneous.

\begin{table}[!t]
\centering
\caption{Main notation.}
\label{tab:symbols}
\begin{minipage}{0.98\columnwidth}
\centering
\scriptsize
\renewcommand{\arraystretch}{1.08}
\setlength{\tabcolsep}{2.4pt}
\begin{tabularx}{\linewidth}{@{\hspace{2pt}}l Y@{\hspace{2pt}}}
\toprule
\textbf{Symbol} & \textbf{Meaning} \\
\midrule
\multicolumn{2}{l}{\textbf{Task and channel model}} \\
$\Omega,\mathcal{G},N,H,W$ & region, grid graph, cell count, and grid size ($N=HW$) \\
$\mathbf{x}_{i},\mathbf{x}_{\mathrm{TX}},\Omega_{\mathrm{bldg}}$ & cell/transmitter coordinates and obstacle region \\
$P_{\mathrm{TX}},P_{\mathrm{LS}},\bar{P},\widetilde{P}$ & transmit power, large-scale power, trend, and residual field \\
$\operatorname{LOS}(\cdot),\boldsymbol{\theta}(\mathbf{x}),d_c$ & LOS indicator, local propagation parameters, and shadowing correlation distance \\
$\mathbf{p}_{i},\mathbf{P},F$ & per-cell attributes, CKM field, and feature dimension \\
$\Omega_{\mathrm{valid}},\Omega_o,\Omega_u$ & valid, observed, and unobserved valid sets \\
$v_i,m_i,\mathbf{M},\mathbf{P}_{\mathrm{obs}}$ & validity indicator, observation indicator, broadcast mask, and masked CKM \\
$\rho,p,\mathcal{S},q_{\rho}$ & missing ratio, probing probability, sampled support, and mask law \\
\addlinespace[2pt]
\multicolumn{2}{l}{\textbf{Estimator, architecture, and loss}} \\
$\Pi,\mathcal{G}_{\mathrm{env}},f_{\theta}$ & environmental prior, site-layout/topology graph, and estimator \\
$\mathcal{D}_{\mathrm{CKM}},\mathcal{R}_{\theta},d_{\mathrm{CKM}}$ & data distribution, task risk, and discrepancy metric \\
$\hat{\boldsymbol{\mu}},\hat{\boldsymbol{\sigma}}^{2},\hat{\mathbf{P}}$ & predicted mean, predictive variance, and reconstructed CKM \\
$\mathcal{F},\mathcal{F}_{u},\widehat{\mathbf{Z}},\boldsymbol{\Delta}$ & Fourier transforms, debiased spectrum, and centered mask \\
$\mathbf{X}_0,\mathbf{X}^{(l)},\boldsymbol{\Phi}$ & pilot-supported initial field, operator latent, and propagated latent field \\
$L_F,C,(K_h,K_w),R$ & conditional Fourier-operator depth, latent width, retained modes, and factorization rank \\
$\boldsymbol{\gamma}_{l},\boldsymbol{\beta}_{l},\mathbf{Z}_{\mathrm{spec}}^{(l)},\mathbf{Z}_{\mathrm{spat}}^{(l)}$ & modulation fields and spectral/local updates \\
$\mathbf{W}_{\mathrm{spec}}^{(l)},\boldsymbol{\Theta}^{(l)}$ & spectral weights and local convolutional branch \\
$\mathbf{S},\mathbf{h}_i,\mathbf{e}_{ij},b_{ij}$ & graph-to-grid feature map, node embedding, edge feature, and attention bias \\
$\mathcal{V}_{\mathrm{qry}},\mathbf{x}_{\mathrm{q}},\boldsymbol{\psi},L_{\mathrm{rff}}$ & requested receiver set/coordinate, Fourier coordinate features, and their count \\
$s_{i,f},Q_{\mathrm{off}},\eta,r_{b,f,j}$ & clipped log variance, off-grid sample count, interpolated validity, and off-grid residual \\
$\mathcal{L}_{\mathrm{total}},\mathcal{L}_{\mathrm{off}},w_{\mathrm{obs}},\lambda_{\mathrm{unc}}$ & total/off-grid losses, observed-cell weight, and variance regularizer \\
\bottomrule
\end{tabularx}
\TableNote{$\mathcal{F}$ is the non-unitary 2D discrete Fourier transform (DFT) in~\eqref{eq:mask_leakage}; $\mathcal{F}_u$ is the unitary DFT in Lemma~\ref{lem:mask_distortion} and Corollary~\ref{cor:naive_debias}.}
\end{minipage}
\end{table}

LOS/NLOS transitions introduce sharp local deviations wherever obstacles block the direct path. Let $\Omega_{\mathrm{bldg}}$ denote the union of obstacle footprints. With the LOS indicator $\operatorname{LOS}(\mathbf{x}_{i},\mathbf{x}_{\mathrm{TX}})=\mathbf{1}(\operatorname{seg}(\mathbf{x}_{i},\mathbf{x}_{\mathrm{TX}})\cap\Omega_{\mathrm{bldg}}=\varnothing)$, the effective propagation parameters $\boldsymbol{\theta}(\mathbf{x})=[\,n(\mathbf{x}),\,\sigma_{\mathrm{SF}}(\mathbf{x})\,]^{\top}$ become spatially varying, not globally stationary~\cite{rappaport2002wireless}, producing sharp local nonstationarity at street corners and building edges that the reconstructor should preserve.

The same model also motivates an operator-level bias toward low spatial modes. Away from transmitter singularities and abrupt blockage boundaries, the long-range component of a propagation field is dominated by relatively low spatial frequencies under standard parabolic-wave approximations~\cite{levy2000parabolic,apaydin2010ssf}. This observation motivates low-mode propagation, while the local LOS/NLOS deviations motivate a separate local correction path. These physical abstractions guide Anchor-CKM's architecture; the model does not solve the governing partial differential equation (PDE) explicitly.

Beyond received power, a CKM may collect multiple channel attributes such as dominant angle of arrival (AoA), delay spread, or LOS probability. For feature dimension $F$, the per-cell attributes are stacked into a multi-attribute field $\mathbf{P}=\{\mathbf{p}_{i}\}_{i=1}^{N}\in\mathbb{R}^{N\times F}$, which is the target of the reconstruction task. The experiments instantiate this definition with a received-power-only setting and a three-target setting containing received power, AoA azimuth, and AoA elevation.

\subsection{Pilot-Based Sparse Spatial Measurement Model}
\label{sec:probing}

Every CKM cell value must ultimately be inferred from physical probes such as pilot snapshots at active users, drive-test campaigns, or opportunistic crowd-sourced measurement reports~\cite{zeng2024ckm_tutorial,survey2025radio}. Spatial probing density is therefore limited by pilot and signaling budgets, survey costs, and the spatial nonuniformity of measurement locations, so the input to any CKM reconstructor is spatially incomplete by construction.

Not every grid cell is physically meaningful: building interiors, out-of-coverage regions, or locations structurally unreachable for user devices must be excluded from the reconstruction target. We introduce a per-cell validity indicator $v_{i}\in\{0,1\}$ and define the valid domain
\begin{equation}
    \Omega_{\mathrm{valid}} = \{\, i \in V : v_{i}=1 \,\},
    \label{eq:valid_domain}
\end{equation}
so that probing is restricted to $\Omega_{\mathrm{valid}}$. The probing event at a valid cell is captured by a binary observation indicator $m_{i}\in\{0,1\}$, with $m_{i}=1$ iff cell $i$ has been probed. The masked observation is written compactly as
\begin{equation}
    \mathbf{P}_{\mathrm{obs}} = \mathbf{M}\odot\mathbf{P},
    \label{eq:obs_model}
\end{equation}
where $\mathbf{M}$ is the feature-broadcast form of $\{m_{i}\}$ and $\odot$ denotes elementwise multiplication. Zeros in $\mathbf{P}_{\mathrm{obs}}$ denote missing entries only through the pair $(\mathbf{P}_{\mathrm{obs}},\mathbf{M})$; they are not interpreted as physical channel values. In the reported experiments, the observation mask is cell-level and is broadcast across the reconstructed attributes; feature-dependent probing masks can be incorporated by replacing $\mathbf{M}$ with an attribute-specific mask. The observed and unobserved valid-index sets are
\begin{equation}
    \Omega_{o} = \{\, i : v_{i}=1,\, m_{i}=1 \,\},\quad
    \Omega_{u} = \{\, i : v_{i}=1,\, m_{i}=0 \,\},
    \label{eq:index_sets}
\end{equation}
and are used both to construct the sparse-measurement input and to evaluate reconstruction error on unobserved valid cells.

Sparsity carries several distinct meanings in the wireless literature: sparse angular or delay channel support, sparse training data, low-rank channel structure, or sparse spatial probing locations. To avoid ambiguity, we fix the usage through the following definition and the accompanying probing assumption.

\begin{definition}[Spatial-probing sparsity and missing ratio]
\label{def:sparsity}
Throughout this paper, spatial-probing sparsity refers exclusively to the fraction of cells in the valid domain $\Omega_{\mathrm{valid}}$ at which no measurement is available; it does not describe the angular/delay support of the physical channel, nor the size of the training set. The corresponding missing ratio is
\begin{equation}
    \rho \;=\; \frac{|\Omega_{u}|}{|\Omega_{\mathrm{valid}}|},
    \qquad
    |\Omega_{o}| \;=\; (1-\rho)\,|\Omega_{\mathrm{valid}}|.
    \label{eq:sparsity_def}
\end{equation}
\end{definition}
The formulation applies to any missing ratio; the evaluation in Section~\ref{sec:experiments} additionally emphasizes stringent low-coverage settings $\rho\in[0.9,0.95]$, i.e., $5\%$--$10\%$ pilot coverage over the valid cells.

\begin{assumption}[Uniform random spatial probing]
\label{asm:sampling}
For a missing ratio $\rho$ prescribed by Definition~\ref{def:sparsity}, the set of probed cells is drawn uniformly at random from the valid domain. Specifically, a subset $\mathcal{S}\subset\Omega_{\mathrm{valid}}$ of cardinality $n_o=|\Omega_{\mathrm{valid}}|-\lfloor\rho|\Omega_{\mathrm{valid}}|\rfloor$ is sampled without replacement, and the observation indicator is set as $m_{i}=\mathbf{1}(i\in\mathcal{S})$. The marginal probing probability is therefore
\begin{equation}
    \Pr(m_{i}=1 \mid v_{i}=1) = \frac{n_o}{|\Omega_{\mathrm{valid}}|},
    \qquad \forall\,i\in\Omega_{\mathrm{valid}},
    \label{eq:prob_sampling}
\end{equation}
which equals $1-\rho$ up to integer rounding and is subject to the hard cardinality constraint $|\Omega_{o}|=n_o$.
\end{assumption}

\begin{remark}[Random and grid-uniform probing]\label{rem:random_sampling}
Uniform random probing serves as a controlled model of irregular support: it produces spatially heterogeneous gaps similar to opportunistic acquisition, while keeping the missing ratio a single adjustable scalar. It does not reproduce the traffic-induced spatial bias of opportunistic reports, which we leave to future work on trajectory-driven sampling. Periodic decimation, by contrast, creates highly structured holes whose mask spectrum differs from that of irregular field measurements.
\end{remark}

\subsection{Problem Formulation}
\label{sec:problem_formulation}

Together, these elements define the task-level reconstruction problem. In addition to the sparse measurements $\mathbf{P}_{\mathrm{obs}}$ and the observation mask $\mathbf{M}$, the estimator has access to a quasi-static environmental prior $\Pi$ that encodes coarse graph-structured knowledge about the propagation context. Depending on the deployment, $\Pi$ encodes either explicit site-layout assets such as obstacle boundaries, transmitter locations, visibility relations, and zone adjacency, or weaker topological cues derived from auxiliary signatures when full map assets are unavailable. The key property of $\Pi$ is its slow evolution relative to instantaneous channel samples. It can complement sparse measurements without being refreshed at the pilot rate, while the reconstruction path itself is recomputed for each pilot-refresh input.

The parametric reconstruction model is written as
\begin{equation}
    (\hat{\boldsymbol{\mu}},\hat{\boldsymbol{\sigma}}^{2})
    = f_{\theta}(\mathbf{P}_{\mathrm{obs}},\mathbf{M},\Pi),
    \label{eq:generic_predictor}
\end{equation}
where $\hat{\boldsymbol{\mu}}\in\mathbb{R}^{N\times F}$ is the predicted CKM mean and $\hat{\boldsymbol{\sigma}}^{2}\in\mathbb{R}^{N\times F}$ is an optional spatially varying predictive-variance output used for deployment-side assessment. We use $\hat{\mathbf{P}}$ only when referring to the reconstructed CKM field as a whole. The joint dependence of $\hat{\boldsymbol{\mu}}$ on both $(\mathbf{P}_{\mathrm{obs}},\mathbf{M})$ and $\Pi$ distinguishes the task from classical interpolation, which does not use environmental side information, and from map synthesis models that are not explicitly anchored to the current sparse measurements.

For training, let $\mathcal{D}_{\mathrm{CKM}}$ denote the distribution of CKM/prior pairs $(\mathbf{P},\Pi)$, and let $q_{\rho}(\mathbf{M}\mid\Omega_{\mathrm{valid}})$ denote the probing-mask law induced by Assumption~\ref{asm:sampling}. For a sampled mask, the unprobed valid-cell set is written as $\Omega_u(\mathbf{M})$ to emphasize that the supervised evaluation locations are determined by the same pilot support that forms the input. The resulting task-level risk is
\begin{equation}
\begin{aligned}
    \theta^{\star}
    = \arg\min_{\theta}\;&
    \mathbb{E}_{(\mathbf{P},\Pi)\sim\mathcal{D}_{\mathrm{CKM}}}
    \mathbb{E}_{\mathbf{M}\sim q_{\rho}(\cdot\mid\Omega_{\mathrm{valid}})}
    \big[\mathcal{R}_{\theta}(\mathbf{P},\mathbf{M},\Pi)\big], \\
    (\hat{\boldsymbol{\mu}}_{\theta},\hat{\boldsymbol{\sigma}}^{2}_{\theta})
    &= f_{\theta}(\mathbf{M}\odot\mathbf{P},\mathbf{M},\Pi),
\end{aligned}
\label{eq:generic_objective}
\end{equation}
where the per-mask reconstruction risk is
\begin{equation}
    \mathcal{R}_{\theta}
    = \frac{1}{|\Omega_u(\mathbf{M})|}
      \sum_{i\in\Omega_u(\mathbf{M})}
      d_{\mathrm{CKM}}\!\left(
        \mathbf{p}_{i},
        \hat{\boldsymbol{\mu}}_{\theta,i},
        \hat{\boldsymbol{\sigma}}^{2}_{\theta,i}
      \right).
    \label{eq:masked_risk}
\end{equation}
Here $d_{\mathrm{CKM}}$ is a generic task-level discrepancy evaluated only on unobserved valid cells, so the objective measures reconstruction instead of probe memorization. Possible instantiations of $d_{\mathrm{CKM}}$ include squared-error, angular, probabilistic-scoring, and communication-oriented surrogates. The Anchor-CKM implementation uses a heteroscedastic likelihood with light observed-cell consistency and off-grid consistency terms, while the formulation itself remains independent of that particular loss choice.

\subsection{Limitations of Direct CKM Reconstruction}
\label{sec:classical_methods}

This formulation clarifies why direct reconstruction becomes weakly tied to measurements as the probing support shrinks. Geostatistical interpolation methods such as ordinary kriging~\cite{cressie1990origins} and distance-weighted schemes such as inverse-distance weighting~\cite{shepard1968two} estimate an unobserved location as
\begin{equation}
    \hat{p}(\mathbf{x}_{\star})
    = \sum_{i\in\Omega_{o}} \lambda_{i}(\mathbf{x}_{\star})\,p(\mathbf{x}_{i}),
    \label{eq:kriging}
\end{equation}
with weights $\lambda_{i}(\mathbf{x}_{\star})$ obtained from a covariance model or distance kernel; in stringent low-coverage cases such as $5\%$--$10\%$ pilot coverage, most predictions are supported by only a few distant samples and local LOS/NLOS transitions violate the stationarity assumptions that make these estimators reliable. Compressed sensing and low-rank recovery~\cite{candes2006robust,candes2009exact} share the same difficulty in a different form: at high missing ratios, the reconstruction is driven more by the regularizer than by the available measurements, and these methods do not directly incorporate environmental geometry.

The same support issue also affects learned direct reconstruction when the first layers operate on the zero-filled observation grid before a pilot-supported representation has been formed. Although the mask $\mathbf{M}$ identifies missing entries, local aggregation and spectral transforms are still applied to neighborhoods and frequency statistics shaped by the sampling pattern, so the estimator must separate measured channel values from missing-entry zeros while simultaneously extrapolating the field; at high missing ratios these two roles become difficult to disentangle. The bottleneck is therefore the order in which information enters the model: the sparse pilot support should first be converted into an operator-ready field, with environmental and spectral refinement applied only afterward. Section~\ref{sec:method} develops this ordering and its analytical motivation.

\section{Anchor-CKM Method}
\label{sec:method}

Anchor-CKM uses a staged, measurement-first reconstruction. Fig.~\ref{fig:arch} summarizes the two time scales: cached layout-conditioned modulation on the slow scale, and pilot-driven densification, spectral propagation, and continuous location-wise prediction at each refresh. The corresponding modules are a partial-convolution densify-to-space (PConv-DtS) initializer, a conditional Fourier neural operator (C-FNO) modulated by feature-wise linear modulation (FiLM)~\cite{perez2018film}, and a residual multi-layer perceptron (MLP) prediction head with predictive variance for deployment-side assessment.

\begin{figure*}[t]
    \centering
    \includegraphics[width=\textwidth]{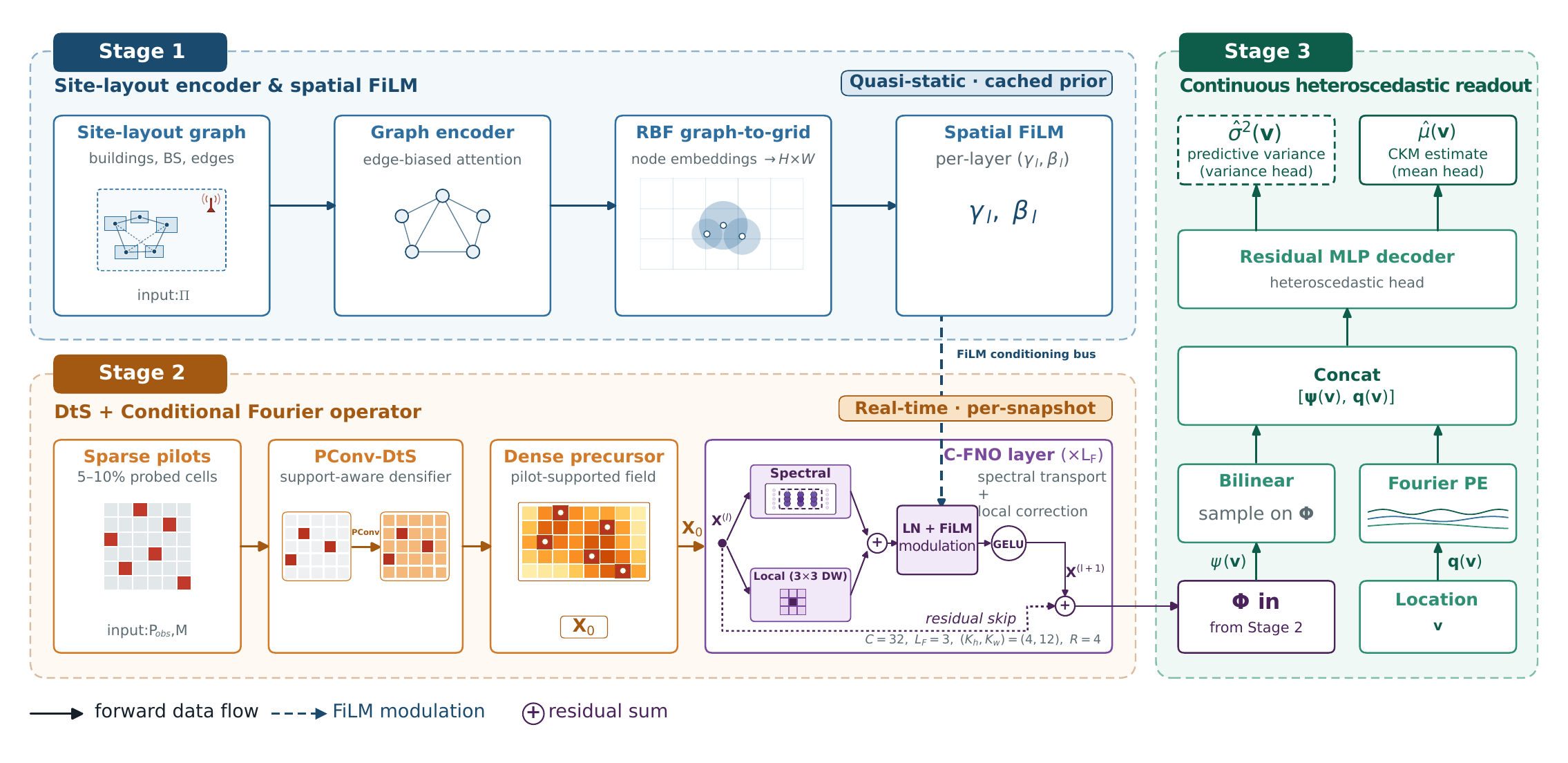}
    \caption{Anchor-CKM architecture. Stage~1 caches layout-conditioned FiLM fields; Stage~2 applies PConv-DtS densification and FiLM-modulated C-FNO propagation per snapshot; Stage~3 predicts continuous mean and predictive variance. Dashed arrows denote cached conditioning.}
    \label{fig:arch}
\end{figure*}

Let $\mathbf{X}_0$ denote the pilot-supported initial field, $\boldsymbol{\Phi}$ the propagated latent field, $\mathbf{x}_{\mathrm{q}}$ a continuous receiver coordinate, and $\boldsymbol{\Gamma}=\{(\boldsymbol{\gamma}_{l},\boldsymbol{\beta}_{l})\}_{l=1}^{L_F}$ the cached collection of spatial FiLM fields. The Anchor-CKM forward pass factorizes as
\begin{equation}
\begin{aligned}
    \boldsymbol{\Gamma}
    &= f_{\mathrm{layout}}(\mathcal{G}_{\mathrm{env}}),
    && \text{(Stage~1, cached)} \\
    \mathbf{X}_0
    &= f_{\mathrm{PConv}}(\mathbf{P}_{\mathrm{obs}}, \mathbf{M}),
    && \text{(Stage~2, DtS)} \\
    \boldsymbol{\Phi}
    &= f_{\mathrm{C\mbox{-}FNO}}\!\left(
        \mathbf{X}_0 \mid \boldsymbol{\Gamma}
    \right),
    && \text{(Stage~2, C-FNO)} \\
    \big(\hat{\boldsymbol{\mu}}(\mathbf{x}_{\mathrm{q}}), \hat{\boldsymbol{\sigma}}^{2}(\mathbf{x}_{\mathrm{q}})\big)
    &= f_{\mathrm{dec}}(\boldsymbol{\Phi}, \mathbf{x}_{\mathrm{q}}).
    && \text{(Stage~3)}
\end{aligned}
\label{eq:Anchor-CKM_pipeline}
\end{equation}
Stage~1 can be computed before a pilot snapshot arrives, but the cached fields affect the radio map only after DtS forms $\mathbf{X}_0$.

\subsection{Analytical Motivation}
\label{sec:physics_design}

The masked pilot grid is a poor direct input to a spectral operator, but the underlying large-scale field becomes suitable for low-mode spectral propagation once pilot support has been stabilized. All expectations in this argument are taken over the random probing mask conditioned on a fixed CKM field $\mathbf{P}$, and the DFT is applied featurewise when $F>1$. Since $\mathbf{P}_{\mathrm{obs}}=\mathbf{M}\odot\mathbf{P}$, the 2D discrete Fourier transform (DFT) of the pilot-observation grid obeys
\begin{equation}
    \mathcal{F}[\mathbf{P}_{\mathrm{obs}}]
    = \mathcal{F}[\mathbf{P}\odot\mathbf{M}]
    = \frac{1}{N}\,\mathcal{F}[\mathbf{P}]\ast\mathcal{F}[\mathbf{M}],
    \label{eq:mask_leakage}
\end{equation}
so the sampling pattern convolves the true spectrum with a broadband mask spectrum before any operator sees the field. For the second-moment analysis, the valid-cell field and mask are embedded on the $H\times W$ grid, with invalid cells excluded from the energy sums. This keeps the DFT notation rectangular while preserving the missing-ratio dependence over $\Omega_{\mathrm{valid}}$. We use a unitary-normalized DFT; under the convention of~\eqref{eq:mask_leakage}, the same $\rho$-dependence holds up to a constant factor.

\begin{lemma}[Mask-induced spectral distortion under uniform probing]
\label{lem:mask_distortion}
Under Assumption~\ref{asm:sampling} with $p=1-\rho$ and $\mathcal{F}_{u}$ the unitary 2D DFT under this embedding, $\mathbb{E}[\mathcal{F}_{u}[\mathbf{P}_{\mathrm{obs}}]]=p\,\mathcal{F}_{u}[\mathbf{P}]$ and
\begin{equation}
    \mathbb{E}\!\left[
        \left\|
            \mathcal{F}_{u}[\mathbf{P}_{\mathrm{obs}}]
            - p\,\mathcal{F}_{u}[\mathbf{P}]
        \right\|_F^2
    \right]
    = \rho(1-\rho)\,\|\mathbf{P}\|_F^2.
\end{equation}
\end{lemma}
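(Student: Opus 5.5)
The plan is to move the whole computation back to the spatial domain using two facts: the unitary DFT is linear, and it is an isometry for the Frobenius norm (Parseval). Throughout, I work with the embedded field: $\mathbf{P}$ restricted to $\Omega_{\mathrm{valid}}$ and zero-padded on invalid cells, and $\mathbf{M}$ supported on $\Omega_{\mathrm{valid}}$. I handle each feature column separately and sum over features at the end, since the DFT acts featurewise and the mask is broadcast.

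\textbf{Mean.} By linearity of $\mathcal{F}_u$ and of expectation,
\[
\mathbb{E}\big[\mathcal{F}_u[\mathbf{P}_{\mathrm{obs}}]\big]=\mathcal{F}_u\big[\mathbb{E}[\mathbf{M}]\odot\mathbf{P}\big].
\]
Under Assumption~\ref{asm:sampling}, \eqref{eq:prob_sampling} gives $\mathbb{E}[m_i]=p$ for every valid cell. On invalid cells the entries of $\mathbf{P}$ are zero, so their mask value is irrelevant. Hence $\mathbb{E}[\mathbf{M}]\odot\mathbf{P}=p\,\mathbf{P}$, which is the first claim.

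\textbf{Second moment.} Write the centered mask as $\boldsymbol{\Delta}=\mathbf{M}-p\mathbf{1}$ on the valid cells. Linearity gives
\[
\mathcal{F}_u[\mathbf{P}_{\mathrm{obs}}]-p\,\mathcal{F}_u[\mathbf{P}]=\mathcal{F}_u[\boldsymbol{\Delta}\odot\mathbf{P}].
\]
Unitarity then gives $\|\mathcal{F}_u[\boldsymbol{\Delta}\odot\mathbf{P}]\|_F^2=\|\boldsymbol{\Delta}\odot\mathbf{P}\|_F^2=\sum_{i\in\Omega_{\mathrm{valid}}}\Delta_i^2\|\mathbf{p}_i\|^2$. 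Taking expectations,
\[
\mathbb{E}\big[\|\boldsymbol{\Delta}\odot\mathbf{P}\|_F^2\big]=\sum_{i\in\Omega_{\mathrm{valid}}}\operatorname{Var}(m_i)\,\|\mathbf{p}_i\|^2 .
\]
Since $m_i$ is Bernoulli with mean $p$, $\operatorname{Var}(m_i)=p(1-p)=\rho(1-\rho)$, and the sum equals $\rho(1-\rho)\|\mathbf{P}\|_F^2$.

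\textbf{Where the difficulty lies.} The step I expect to need care is the dependence created by sampling without replacement. The indicators $m_i$ are negatively correlated because of the hard constraint $|\Omega_o|=n_o$, so a Fourier-domain computation would pick up cross-covariance terms. Passing through Parseval before taking expectations avoids this: the squared norm in the spatial domain has only diagonal terms $\Delta_i^2$, so only the marginal law of each $m_i$ matters.

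\textbf{Caveat on exactness.} The identity is exact with $p=n_o/|\Omega_{\mathrm{valid}}|$. With $p=1-\rho$ it holds up to the integer rounding already noted in Assumption~\ref{asm:sampling}, and I will state it with that interpretation. I will also remark that invalid cells contribute nothing to either side, which is why $\|\mathbf{P}\|_F^2$ is the energy over $\Omega_{\mathrm{valid}}$.
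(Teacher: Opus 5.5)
Your proof is correct and follows the paper's argument: you center the mask as $\boldsymbol{\Delta}=\mathbf{M}-p\mathbf{1}$, use linearity for the mean, and apply Parseval so that only the diagonal terms $\mathbb{E}[\Delta_i^2]=\rho(1-\rho)$ remain. Your two extra remarks are accurate refinements that the paper leaves implicit: the without-replacement correlations never enter because the norm is moved to the spatial domain before taking expectations, and the identity is exact only for $p=n_o/|\Omega_{\mathrm{valid}}|$, so with $p=1-\rho$ it holds up to integer rounding.
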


\begin{proof}
Let $\boldsymbol{\Delta}=\mathbf{M}-p\mathbf{1}$. Then $\mathbb{E}[\boldsymbol{\Delta}]=\mathbf{0}$, giving the mean result. The residual equals $\mathcal{F}_{u}[\boldsymbol{\Delta}\odot\mathbf{P}]$; by Parseval, its squared norm is $\sum_{i\in\Omega_{\mathrm{valid}}}\Delta_i^2\|\mathbf{p}_i\|_2^2$. Taking expectations with $\mathbb{E}[\Delta_i^2]=\rho(1-\rho)$ gives the stated expression.
\end{proof}

\begin{corollary}[Variance blow-up under naive spectral debiasing]
\label{cor:naive_debias}
With the same notation as Lemma~\ref{lem:mask_distortion}, define the naively debiased masked spectrum $\widehat{\mathbf{Z}} \triangleq p^{-1}\mathcal{F}_{u}[\mathbf{P}_{\mathrm{obs}}]$. Then $\mathbb{E}[\widehat{\mathbf{Z}}]=\mathcal{F}_{u}[\mathbf{P}]$ and
\begin{equation}
    \mathbb{E}\!\left[
        \left\|
            \widehat{\mathbf{Z}}-\mathcal{F}_{u}[\mathbf{P}]
        \right\|_F^2
    \right]
    = \frac{1-p}{p}\,\|\mathbf{P}\|_F^2
    = \frac{\rho}{1-\rho}\,\|\mathbf{P}\|_F^2.
\end{equation}
Therefore, as $\rho\rightarrow 1$, a spectral input formed by naively debiasing the raw masked grid incurs an increasingly large variance penalty even before model mismatch is considered.
\end{corollary}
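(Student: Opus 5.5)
The plan is to obtain the corollary from Lemma~\ref{lem:mask_distortion} by linearity and a scaling argument, with no new probabilistic computation. Both the expectation and the DFT are linear, so the unbiasedness claim follows immediately from the lemma's mean statement:
\[
\mathbb{E}[\widehat{\mathbf{Z}}]=p^{-1}\,\mathbb{E}\big[\mathcal{F}_u[\mathbf{P}_{\mathrm{obs}}]\big]=p^{-1}\,p\,\mathcal{F}_u[\mathbf{P}]=\mathcal{F}_u[\mathbf{P}].
\]

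For the second moment, rewrite the error as a rescaled lemma residual:
\[
\widehat{\mathbf{Z}}-\mathcal{F}_u[\mathbf{P}]=p^{-1}\big(\mathcal{F}_u[\mathbf{P}_{\mathrm{obs}}]-p\,\mathcal{F}_u[\mathbf{P}]\big).
\]
Taking the squared Frobenius norm pulls out a factor $p^{-2}$, because $p$ is a deterministic scalar under Assumption~\ref{asm:sampling}. Applying Lemma~\ref{lem:mask_distortion} then gives
\[
p^{-2}\rho(1-\rho)\|\mathbf{P}\|_F^2 .
\]
Substituting $\rho=1-p$ turns this into $p^{-2}(1-p)p\,\|\mathbf{P}\|_F^2=\frac{1-p}{p}\|\mathbf{P}\|_F^2$, which equals $\frac{\rho}{1-\rho}\|\mathbf{P}\|_F^2$. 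Here $\|\mathbf{P}\|_F^2$ is understood over valid cells, as in the lemma's embedding.

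The last sentence of the statement is a limit remark. The factor $\rho/(1-\rho)$ increases monotonically and diverges as $\rho\to1$ whenever $\|\mathbf{P}\|_F>0$. Since $\widehat{\mathbf{Z}}$ is unbiased, this quantity is the entire mean-squared error of the spectral input, so it is a pure variance penalty and does not depend on any later operator.

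The only step that needs care is consistency with the lemma's conventions. The lemma uses $\mathbb{E}[\Delta_i^2]=\rho(1-\rho)$ with $p=1-\rho$. Under sampling without replacement this holds exactly only if $p$ is taken to be the exact marginal $n_o/|\Omega_{\mathrm{valid}}|$ from~\eqref{eq:prob_sampling}. I would note that the corollary inherits the lemma's identification $p=1-\rho$ up to integer rounding. Cross-cell correlations under the hard cardinality constraint do not matter, because Parseval reduces the squared norm to the diagonal sum $\sum_i\Delta_i^2\|\mathbf{p}_i\|_2^2$. Beyond this bookkeeping, the argument is a direct rescaling.
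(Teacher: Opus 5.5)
Your proposal is correct and follows the paper's proof. The mean follows by linearity from the lemma, and writing $\widehat{\mathbf{Z}}-\mathcal{F}_u[\mathbf{P}]=p^{-1}\big(\mathcal{F}_u[\mathbf{P}_{\mathrm{obs}}]-p\,\mathcal{F}_u[\mathbf{P}]\big)$ scales the lemma's residual energy by $p^{-2}$, giving $\rho(1-\rho)/p^{2}=(1-p)/p$. Your added remark also holds: exactness needs $p$ to be the exact marginal $n_o/|\Omega_{\mathrm{valid}}|$, and Parseval leaves only a diagonal sum, so the without-replacement correlations do not matter. This sharpens the paper's ``up to integer rounding'' caveat.
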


\begin{proof}
This follows by substituting $\widehat{\mathbf{Z}}=p^{-1}\mathcal{F}_{u}[\mathbf{P}_{\mathrm{obs}}]$ and applying Lemma~\ref{lem:mask_distortion}, which scales the residual energy by $p^{-2}$.
\end{proof}
The variance factor is already $9$ at $\rho=0.9$ and $19$ at $\rho=0.95$, which argues against spectral operations on the raw masked grid.

Once pilot support has been stabilized, however, the smooth large-scale component is well suited to low-mode propagation. After removing the deterministic distance trend in~\eqref{eq:path_loss}, the residual shadowing field has the following low-pass spectral form.

\begin{proposition}[Spectral compactness of the large-scale component]
\label{prop:spectral_compactness}
Let
\begin{equation}
    P_{\mathrm{LS}}(\mathbf{x})
    = \bar{P}(\mathbf{x}) + X_{\sigma}(\mathbf{x}),
\end{equation}
where $\bar{P}(\mathbf{x}) = P_{\mathrm{TX}} - 10n\log_{10}\!\big(\|\mathbf{x}-\mathbf{x}_{\mathrm{TX}}\|/r_0\big)$ is the deterministic log-distance trend and the shadow-fading term has covariance $K(r)=\sigma_{\mathrm{SF}}^2 e^{-r/d_c}$ with correlation distance $d_c$. On any bounded tile that excludes the transmitter singularity, the fluctuation field $\widetilde{P}(\mathbf{x}) \triangleq P_{\mathrm{LS}}(\mathbf{x})-\bar{P}(\mathbf{x})$ has isotropic power spectral density
\begin{equation}
    S_{\widetilde{P}}(\nu)
    = \frac{2\pi \sigma_{\mathrm{SF}}^2 d_c^2}{\left(1+d_c^2 \nu^2\right)^{3/2}},
    \qquad \nu=\|\mathbf{k}\|_2,
\end{equation}
and therefore satisfies $S_{\widetilde{P}}(\nu)\le C(1+d_c^2\nu^2)^{-1}$ for some constant $C>0$. Hence most modeled shadowing energy is concentrated below the characteristic spatial frequency scale $1/d_c$, while the high-frequency tail decays polynomially.
\end{proposition}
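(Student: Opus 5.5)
The plan is to compute the two-dimensional Fourier transform of the isotropic exponential covariance directly, since by Wiener--Khinchin the power spectral density of the stationary residual $\widetilde{P}=X_\sigma$ is the Fourier transform of $K(\|\mathbf{r}\|)$. The first step is the reduction to this setting. On a bounded tile excluding $\mathbf{x}_{\mathrm{TX}}$, the trend $\bar{P}$ is a smooth deterministic function. Subtracting it leaves exactly the shadow-fading term, $\widetilde{P}=X_\sigma$, which is modeled as a zero-mean, wide-sense stationary, isotropic Gaussian field with covariance $K(r)=\sigma_{\mathrm{SF}}^2e^{-r/d_c}$ in Gudmundson's form. We state the convention $S(\mathbf{k})=\int_{\mathbb{R}^2}K(\|\mathbf{r}\|)e^{-i\mathbf{k}\cdot\mathbf{r}}\,d\mathbf{r}$ with angular wavenumber $\mathbf{k}$. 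We also note that restriction to a tile only windows the field; the proposition is read as a statement about the spectral density of the underlying stationary model.

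The second step exploits isotropy to reduce to a Hankel transform. Using polar coordinates and $\int_0^{2\pi}e^{-i\nu r\cos\phi}\,d\phi=2\pi J_0(\nu r)$, we get $S(\nu)=2\pi\sigma_{\mathrm{SF}}^2\int_0^\infty e^{-r/d_c}J_0(\nu r)\,r\,dr$. The key identity is the Laplace transform $\int_0^\infty e^{-ar}J_0(\nu r)\,dr=(a^2+\nu^2)^{-1/2}$. Differentiating it with respect to $a$ gives $\int_0^\infty re^{-ar}J_0(\nu r)\,dr=a(a^2+\nu^2)^{-3/2}$, where the differentiation under the integral is justified by dominated convergence for $a>0$. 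Setting $a=1/d_c$ yields $S(\nu)=2\pi\sigma_{\mathrm{SF}}^2 d_c^{-1}(d_c^{-2}+\nu^2)^{-3/2}=2\pi\sigma_{\mathrm{SF}}^2d_c^2(1+d_c^2\nu^2)^{-3/2}$, which is the claimed form.

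The bound and the energy-concentration statement follow directly. Since $1+d_c^2\nu^2\ge1$, we have $(1+d_c^2\nu^2)^{-3/2}\le(1+d_c^2\nu^2)^{-1}$, so $C=2\pi\sigma_{\mathrm{SF}}^2d_c^2$ works. For concentration, we integrate in polar coordinates. With $u=d_c\nu$, the fraction of the total variance $\sigma_{\mathrm{SF}}^2$ contained in $\nu\le\nu_0$ equals $1-(1+d_c^2\nu_0^2)^{-1/2}$. This quantifies how the mass lies below the scale $1/d_c$, with a tail decaying like $(d_c\nu_0)^{-1}$; the density itself decays polynomially as $\nu^{-3}$.

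The main obstacle is interpretive rather than computational. The statement is posed on a bounded tile, whereas a PSD is properly defined for the stationary field on $\mathbb{R}^2$, and the nonstationary LOS/NLOS modulation of $\boldsymbol{\theta}(\mathbf{x})$ is excluded by assumption. The way to handle this is to state explicitly that within the tile the shadowing parameters are taken as homogeneous and the field as stationary. Any windowing by the tile convolves $S$ with the window's spectrum, which is an integrable kernel, and this does not change the low-pass character. The Bessel-integral identity is classical and is cited rather than re-derived.
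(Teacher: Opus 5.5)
Your proposal is correct and follows the paper's route. The paper justifies the formula in one line as the Hankel transform of the isotropic exponential covariance; you carry that transform out explicitly via the Laplace transform of $J_0$ and differentiation in $a$, and your constant $C=2\pi\sigma_{\mathrm{SF}}^2 d_c^2$ is valid.

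One caution concerns the final claim. Your own cumulative-energy formula puts only $1-2^{-1/2}\approx 0.29$ of the variance below $\nu_0=1/d_c$, and reaching $90\%$ requires $\nu_0\approx 10/d_c$. So the statement that ``most'' energy lies below $1/d_c$ holds only in an order-of-magnitude sense, and you should say so explicitly rather than leave it implicit.
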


The expression follows from the Hankel transform of the isotropic exponential covariance. Proposition~\ref{prop:spectral_compactness} justifies restricting spectral propagation to low modes. Local nonstationarity remains: LOS/NLOS parameters can change abruptly near buildings and street corners, so Anchor-CKM pairs the low-mode C-FNO branch with a local correction path modulated by layout-conditioned FiLM fields.

\subsection{Stage 1: Cached Site-Layout Encoder and Spatial FiLM}
\label{sec:layout_encoder}

Stage~1 operates on the cached time scale: it converts quasi-static environmental knowledge, represented as a graph, into per-layer spatial modulation fields $(\boldsymbol{\gamma}_{l},\boldsymbol{\beta}_{l})$ for the conditional Fourier operator. Because the prior changes far more slowly than pilot reports, this stage is computed once and reused across snapshots.

\subsubsection{Environmental Prior and Layout Graph}
\label{sec:Anchor-CKM_prior}

Anchor-CKM realizes the prior $\Pi$ as a site-layout graph $\mathcal{G}_{\mathrm{env}}=(V_{\mathrm{env}},E_{\mathrm{env}})$, which serves as quasi-static environmental graph knowledge. With explicit layouts, nodes represent the transmitter, building corners, and retained-building centroids; edges encode perimeters, corner-to-building membership, transmitter visibility, and local geometric context. Without explicit layouts, the same prior representation is constructed from indirect topology zones, such as strongest non-target base station (BS) components on I1-2.5, providing cached side information without target CKM values or current pilots.

\subsubsection{Graph Encoder and Spatial FiLM Projection}
\label{sec:envgnn}

The layout encoder extracts spatially varying propagation cues from $\mathcal{G}_{\mathrm{env}}$. Node interactions use edge-biased attention:
\begin{equation}
    \begin{aligned}
        \alpha_{ij}
        &= \operatorname*{softmax}_{j}\!\left(
            \frac{(\mathbf{Q}\mathbf{h}_i)^\top(\mathbf{K}\mathbf{h}_j)}{\sqrt{d}}
            + b_{ij}
        \right), \\
        b_{ij}
        &= \mathbf{w}_e^\top \mathbf{e}_{ij},
    \end{aligned}
    \label{eq:attention}
\end{equation}
where the additive bias $b_{ij}$ injects inter-node geometry directly into the attention logits~\cite{vaswani2017attention,ying2021graphormer}. The resulting node embeddings are projected to the $H\times W$ grid by Gaussian radial-basis-function (RBF) interpolation:
\begin{equation}
\begin{aligned}
    \kappa_i(h,w)
    &= \exp\!\left(-\frac{\|\mathbf{x}_{hw}-\mathbf{x}_i\|^2}{2\sigma_{\mathrm{rbf}}^2}\right), \\
    \omega_i(h,w)
    &= \frac{\kappa_i(h,w)}{\sum_{j=1}^{|V_{\mathrm{env}}|}\kappa_j(h,w)}, \\
    \mathbf{S}(h,w)
    &= \sum_{i=1}^{|V_{\mathrm{env}}|}\omega_i(h,w)\,\mathbf{h}_i,
\end{aligned}
\label{eq:rbf}
\end{equation}
where $\sigma_{\mathrm{rbf}}=\exp(\tilde{\sigma})$ is a positive learnable bandwidth. The resulting spatial feature map is then mapped to per-layer FiLM fields:
\begin{equation}
    \big(\boldsymbol{\gamma}_l(h,w),\, \boldsymbol{\beta}_l(h,w)\big)
    = \operatorname{Conv}_{1\times 1}\!\big(\mathbf{S}(h,w)\big),
    \quad l = 1, \ldots, L_F.
    \label{eq:spatial_film}
\end{equation}
This construction is cell-dependent: nearby geometry dominates each cell's modulation, so different regions receive different operator responses. The entire output $\{(\boldsymbol{\gamma}_l,\boldsymbol{\beta}_l)\}_{l=1}^{L_F}$ is computed once and cached; only the pilot-dependent propagation path is recomputed on each refresh.

\subsection{Stage 2: Pilot-Supported Conditional Fourier Neural Operator}
\label{sec:cfno}

Stage~2 operates on the fast per-snapshot scale and runs for each current sparse pilot grid. It contains a pilot-support initialization sub-module followed by a C-FNO propagation sub-module that uses the cached FiLM modulation $(\boldsymbol{\gamma}_l,\boldsymbol{\beta}_l)$.

\subsubsection{Pilot-Support Initialization via Densify-to-Space}
\label{sec:dts}

The first sub-module of Stage~2 turns scattered pilots into an operator-ready, pilot-supported field. Mask distortion arises before any operator propagation, so the support must be stabilized before spectral or local operations act on the latent field. We implement this stabilization using support-aware partial convolutions~\cite{liu2018pconv}, which update only with valid neighbors and grow the effective support with depth.

With $s_w = \lVert \mathbf{M}_w \rVert_1$ denoting the number of valid entries in the local mask patch, the PConv update at each spatial position is
\begin{equation}
    \mathbf{x}' =
    \begin{cases}
        \dfrac{k_{\mathrm{pc}}^2}{s_w}\,\mathbf{W}^\top(\mathbf{x} \odot \mathbf{M}_w) + \mathbf{b},
        & \text{if } s_w > 0, \\[2pt]
        \mathbf{0}, & \text{if } s_w = 0.
    \end{cases}
    \label{eq:pconv}
\end{equation}
The factor $k_{\mathrm{pc}}^2/s_w$ keeps the response roughly invariant to how much pilot support is available in the receptive field, and the updated mask $\mathbf{M}'_w = \mathbf{1}(s_w > 0)$ lets the effective support expand with depth. The three-layer stack used here produces an initial field $\mathbf{X}_0$ for C-FNO propagation without treating missing entries as physical channel values; a light observed-cell consistency term during training keeps $\mathbf{X}_0$ consistent with the probed cells.

\subsubsection{Conditional Fourier Neural Operator}
\label{sec:cfno_operator}

Once the DtS sub-module has stabilized the pilot support, Stage~2 applies the C-FNO with two complementary paths: a global spectral branch for smooth long-range propagation and a local spatial branch for blockage-boundary correction. The spatial FiLM fields $(\boldsymbol{\gamma}_l,\boldsymbol{\beta}_l)$ cached from Stage~1 modulate both paths so that the same operator layer can respond differently across LOS corridors and shadow regions.

Let $\mathbf{X}^{(0)}=\mathbf{X}_0$. For layer $l=1,\ldots,L_F$, the conditioned dual-path propagation evolves as
\begin{equation}
\begin{aligned}
    \mathbf{Z}_{\mathrm{spec}}^{(l)}
    &= \mathcal{F}^{-1}\!\left(\mathbf{W}_{\mathrm{spec}}^{(l)} \odot \mathcal{F}(\mathbf{X}^{(l-1)})\right), \\
    \mathbf{Z}_{\mathrm{spat}}^{(l)}
    &= \mathbf{X}^{(l-1)} \circledast \boldsymbol{\Theta}^{(l)}, \\
    \Delta\mathbf{X}^{(l)}
    &= \boldsymbol{\gamma}_{l} \odot \operatorname{LN}\!\left(
        \mathbf{Z}_{\mathrm{spec}}^{(l)} + \mathbf{Z}_{\mathrm{spat}}^{(l)}
    \right) + \boldsymbol{\beta}_{l}, \\
    \mathbf{X}^{(l)}
    &= \mathbf{X}^{(l-1)} + \sigma_{\mathrm{GELU}}\!\left(\Delta\mathbf{X}^{(l)}\right).
\end{aligned}
\label{eq:cfno_layer}
\end{equation}
Here $\operatorname{LN}$ denotes layer normalization, $\sigma_{\mathrm{GELU}}$ denotes the Gaussian error linear unit (GELU), $\mathcal{F}$ and $\mathcal{F}^{-1}$ are 2D DFTs, $\mathbf{W}_{\mathrm{spec}}^{(l)}$ acts only on the retained $K_h\times K_w$ low-frequency modes, and $\boldsymbol{\Theta}^{(l)}$ is a local convolutional correction branch. The cell-wise FiLM fields modulate the normalized sum of spectral and spatial responses. In the reported implementation, retained spectral weights are rank-$R$ factorized and the local branch is depthwise separable. After $L_F$ layers, $\boldsymbol{\Phi}=\mathbf{X}^{(L_F)}$.

\subsection{Stage 3: Continuous Location-Wise Prediction with Predictive Variance}
\label{sec:decoder}

To estimate the CKM at an arbitrary receiver location, the spatial coordinate $\mathbf{x}_{\mathrm{q}} = (x, y)$ is encoded via random Fourier features~\cite{tancik2020fourier}:
\begin{equation}
    \boldsymbol{\psi}(\mathbf{x}_{\mathrm{q}})
    = \big[\cos(2\pi\mathbf{B}\mathbf{x}_{\mathrm{q}})^{\top},\,
       \sin(2\pi\mathbf{B}\mathbf{x}_{\mathrm{q}})^{\top}\big]^{\top}
    \in \mathbb{R}^{2L_{\mathrm{rff}}},
    \label{eq:rff}
\end{equation}
where $\mathbf{B} \in \mathbb{R}^{L_{\mathrm{rff}} \times 2}$ is a fixed random matrix; we refer to $\boldsymbol{\psi}(\mathbf{x}_{\mathrm{q}})$ as Fourier coordinate features. Let $\mathbf{q}(\mathbf{x}_{\mathrm{q}}) = \operatorname{bilinear}(\boldsymbol{\Phi}, \mathbf{x}_{\mathrm{q}})$ denote the interpolated latent feature. The continuous prediction head is expressed as
\begin{equation}
\begin{aligned}
    \mathbf{z}(\mathbf{x}_{\mathrm{q}})
    &= \big[\boldsymbol{\psi}(\mathbf{x}_{\mathrm{q}})^{\top},\, \mathbf{q}(\mathbf{x}_{\mathrm{q}})^{\top}\big]^{\top}, \\
    \mathbf{h}(\mathbf{x}_{\mathrm{q}})
    &= \mathcal{H}_{\mathrm{res}}\!\left(
        \mathbf{W}_{\mathrm{in}}\mathbf{z}(\mathbf{x}_{\mathrm{q}}) + \mathbf{b}_{\mathrm{in}}
    \right), \\
    \hat{\mu}_f(\mathbf{x}_{\mathrm{q}})
    &= \mathbf{w}_{\mu,f}^{\top}\mathbf{h}(\mathbf{x}_{\mathrm{q}}) + b_{\mu,f}, \\
    \log\hat{\sigma}_f^2(\mathbf{x}_{\mathrm{q}})
    &= \mathbf{w}_{\sigma,f}^{\top}\mathbf{h}(\mathbf{x}_{\mathrm{q}}) + b_{\sigma,f}.
\end{aligned}
\label{eq:decoder}
\end{equation}
Here $\mathcal{H}_{\mathrm{res}}$ is a two-layer residual MLP, and the log-variance output is clipped to $[-9,4]$ for numerical stability (we use $\operatorname{clip}_{[a,b]}(x)\triangleq\max\{a,\min\{x,b\}\}$ throughout). The prediction head is continuous across arbitrary coordinates without per-scene fitting: the latent field carries the learned propagation representation, and the coordinate-based head remains lightweight.

To train the mean prediction and obtain a predictive-variance output, Anchor-CKM uses a heteroscedastic Gaussian regression loss on valid nodes, following the scoring-rule form commonly used for aleatoric predictive variance~\cite{kendall2018multi}. Let $\Omega_o$ and $\Omega_u$ denote the observed and unobserved valid-node sets. For feature $f$ and either set $S\in\{o,u\}$, the per-set heteroscedastic negative log-likelihood (NLL) is
\begin{equation}
    \mathcal{L}_f^{(S)}
    = \frac{1}{|\Omega_S|}
      \sum_{i \in \Omega_S}
      \left[
        \tfrac{1}{2} e^{-s_{i,f}} \big(y_{i,f} - \hat{\mu}_{i,f}\big)^2
        + \tfrac{1}{2} s_{i,f}
      \right],
    \label{eq:nll}
\end{equation}
where $s_{i,f} = \operatorname{clip}_{[-8,4]}(\log \hat{\sigma}_{i,f}^2)$ further tightens the log-variance clipping range in~\eqref{eq:decoder} for gradient stability of the heteroscedastic term, and $\hat{\mu}_{i,f}$ is the predicted mean for target $f$ at node $i$. The unobserved term $\mathcal{L}_f^{(u)}$ carries the main supervision, whereas $\mathcal{L}_f^{(o)}$ is kept deliberately light so that the prediction head remains consistent with probed cells without letting the probed set dominate learning. The reported models are trained with
\begin{equation}
    \begin{aligned}
        \mathcal{L}_{\mathrm{total}}
        &= \sum_f \big(
            \mathcal{L}_f^{(u)}
            + w_{\mathrm{obs}}\, \mathcal{L}_f^{(o)}
        \big)
        + \mathbf{1}_{\mathrm{phase2}}\, \mathcal{L}_{\mathrm{off}} \\
        &\quad + \lambda_{\mathrm{unc}}
        \sum_f \tfrac{1}{|\Omega_u|}
        \sum_{i \in \Omega_u}
        \big(\log \hat{\sigma}_{i,f}^2\big)^2,
    \end{aligned}
    \label{eq:total}
\end{equation}
where $\mathbf{1}_{\mathrm{phase2}}$ activates the off-grid term only in the second training phase, $w_{\mathrm{obs}}$ controls the observed-cell consistency penalty, and $\lambda_{\mathrm{unc}}$ bounds the log-variance head. During phase~2, each training tile draws $Q_{\mathrm{off}}$ continuous coordinates $\mathbf{x}_{\mathrm{q},b,j}$ in the normalized tile domain. Let $\mathcal{F}_{\mathrm{tar}}$ denote the set of reconstructed target attributes, $\mathcal{I}_{h}[\mathbf{P}_{f}]$ denote the bilinear interpolation of the dense training target, $\eta_{b,j}$ the interpolated valid-domain indicator, and $r_{b,f,j}=\hat{\mu}_{f}(\mathbf{x}_{\mathrm{q},b,j})-\mathcal{I}_{h}[\mathbf{P}_{f}](\mathbf{x}_{\mathrm{q},b,j})$. The off-grid term is
\begin{equation}
    \mathcal{L}_{\mathrm{off}}
    = \frac{1}{B|\mathcal{F}_{\mathrm{tar}}|}
    \sum_{b=1}^{B}\sum_{f\in\mathcal{F}_{\mathrm{tar}}}
    \frac{\sum_{j} \eta_{b,j}\, r_{b,f,j}^{2}}{\sum_{j} \eta_{b,j} + \epsilon},
    \label{eq:offgrid_loss}
\end{equation}
where the inner sums run over $j=1,\dots,Q_{\mathrm{off}}$. This term regularizes interpolation between grid nodes during supervised training only; it is not a PDE residual and is not used at inference, where Anchor-CKM receives only sparse pilots, the mask, and the cached prior.

The cached inference flow is listed in Algorithm~\ref{alg:Anchor-CKM}.

\begin{algorithm}[!t]
\caption{Measurement-first Anchor-CKM inference}
\label{alg:Anchor-CKM}
\footnotesize
\begin{algorithmic}[1]
\Require Sparse pilots $\mathbf{P}_{\mathrm{obs}}$, mask $\mathbf{M}$, prior $\Pi$, requested receiver set $\mathcal{V}_{\mathrm{qry}}$
\Ensure $\{\hat{\boldsymbol{\mu}}(\mathbf{x}_{\mathrm{q}}),\hat{\boldsymbol{\sigma}}^2(\mathbf{x}_{\mathrm{q}})\}_{\mathbf{x}_{\mathrm{q}}\in\mathcal{V}_{\mathrm{qry}}}$
\If{$\boldsymbol{\Gamma}$ is unavailable or $\Pi$ has changed}
    \State Build $\mathcal{G}_{\mathrm{env}}$ from $\Pi$, encode it, project node embeddings to the grid, and cache $\boldsymbol{\Gamma}=\{(\boldsymbol{\gamma}_l,\boldsymbol{\beta}_l)\}_{l=1}^{L_F}$.
\Else
    \State Reuse the cached $\boldsymbol{\Gamma}$.
\EndIf
\State Initialize $\mathbf{X}^{(0)} \leftarrow f_{\mathrm{PConv}}(\mathbf{P}_{\mathrm{obs}},\mathbf{M})$ by DtS.
\For{$l=1,\ldots,L_F$}
    \State Compute low-mode spectral propagation and local spatial correction from $\mathbf{X}^{(l-1)}$.
    \State Modulate the combined update with $(\boldsymbol{\gamma}_l,\boldsymbol{\beta}_l)$.
    \State $\mathbf{X}^{(l)} \leftarrow \mathrm{C\mbox{-}FNO}_l(\mathbf{X}^{(l-1)};\boldsymbol{\gamma}_l,\boldsymbol{\beta}_l)$.
\EndFor
\State $\boldsymbol{\Phi}\leftarrow\mathbf{X}^{(L_F)}$.
\For{each requested location $\mathbf{x}_{\mathrm{q}}\in\mathcal{V}_{\mathrm{qry}}$}
    \State Interpolate $\boldsymbol{\Phi}$ at $\mathbf{x}_{\mathrm{q}}$ and concatenate Fourier coordinate features.
    \State Map the feature vector to mean and predictive variance $(\hat{\boldsymbol{\mu}},\hat{\boldsymbol{\sigma}}^2)$.
\EndFor
\end{algorithmic}
\end{algorithm}

During training, the forward pass is optimized using the loss in~\eqref{eq:total}; during inference, the predicted mean and variance are returned at the requested receiver locations.

\subsection{Complexity Analysis}
\label{sec:complexity}

Let $N=HW$, $C$ be the latent width, $L_F$ the number of C-FNO layers, $R$ the spectral rank, and $(K_h,K_w)$ the retained modes. With cached layout conditioning, Stage~2 inference costs
\begin{equation}
\begin{aligned}
\mathcal{C}_{\mathrm{cached}}
    = \mathcal{O}\!\Big(
        &L_F\big(CN\log N + C^2N + R C K_hK_w\big) \\
        &+ C_{\mathrm{pc}}N
    \Big),
\end{aligned}
\label{eq:complexity_Anchor-CKM}
\end{equation}
where the terms correspond to fast Fourier transform (FFT) propagation, channel mixing, retained-mode spectral multiplication, and the fixed-depth PConv initializer. With fixed width, rank, retained modes, and PConv kernel size, the grid-size scaling is summarized by the compact spectral-operator term $\mathcal{O}(L_F C N\log N)$ used in Table~\ref{tab:complexity}. If the environmental prior changes, graph encoding and graph-to-grid FiLM projection add
\begin{equation}
\mathcal{C}_{\mathrm{uncached}}
    = \mathcal{C}_{\mathrm{cached}}
    + \mathcal{O}\!\left(|E_{\mathrm{env}}|d^2 + N |V_{\mathrm{env}}| d\right).
\end{equation}
On the reported O1 tiles, $N=32\times181$ and $|V_{\mathrm{env}}|\ll N$, so the grid-domain terms dominate cached inference. Table~\ref{tab:complexity} reports the cached footprint against reproduced single-pass baselines; diffusion estimators are excluded because their per-inference cost scales with reverse-sampling steps.

\begin{table*}[t]
\centering
\caption{Cached inference footprint on one $32{\times}181$ O1 tile.}
\label{tab:complexity}
\begin{minipage}{\DoubleTableWidth}
\centering
\TableStyle
\setlength{\tabcolsep}{4pt}
\begin{tabularx}{\linewidth}{@{\hspace{\TableSidePad}}l C C C Y@{\hspace{\TableSidePad}}}
\toprule
\textbf{Model} & \textbf{Parameters (M)} & \textbf{FLOPs (G)} & \textbf{Latency (ms)} & \textbf{Dominant cost} \\
\midrule
RadioUNet            & 0.93 & 0.44 & 3.42 & $\mathcal{O}(k^{2} C^{2} N)$ \\
PMNet                & 0.52 & 0.84 & 2.85 & $\mathcal{O}(k^{2} C^{2} N)$ \\
RadioGAT             & 0.13 & 1.53 & 5.67 & $\mathcal{O}(|E| C^{2} + |V| C^{2})$ \\
\rowcolor[gray]{0.95}
\textbf{Anchor-CKM (Ours, cached)} & \textbf{0.27} & \textbf{0.98} & \textbf{2.53} & $\mathcal{O}(L_F C N \log N)$ \\
\bottomrule
\end{tabularx}
\TableNote{Latency: A100, batch size~$1$, $100$ runs after $30$ warm-ups. Cached Anchor-CKM excludes Stage~1; use uncached cost only when the prior changes.}
\end{minipage}
\end{table*}

\section{Experimental Evaluation}
\label{sec:experiments}

The evaluation covers whole-field accuracy and downstream decision metrics, robustness to missing ratio and pilot noise, boundary behavior, predictive-variance assessment, and multi-target reconstruction.

\subsection{Experimental Setup}
\label{sec:setup}

\subsubsection{Scenarios and Splits}
We evaluate on three DeepMIMO ray-tracing scenarios~\cite{alkhateeb2019deepmimo}: O1-28, O1-60, and I1-2.5. The outdoor O1-28 and O1-60 scenes come with explicit layout and form the main matched-scenario benchmark; I1-2.5 lacks comparable explicit layout assets, so we treat it as a portability test under a weaker indirect prior. This indirect prior is built from receiver coordinates, reference BS locations, the identity of the strongest non-target BS at each cell, and the zone adjacency induced by that signature grid.

All three splits are transmitter-disjoint, so no transmitter index appears in more than one of the train, validation, and test partitions. On O1-28 and O1-60, we assign transmitters $t003$--$t010$, $t011$--$t012$, and $t013$--$t014$ to the three partitions, yielding $1360/340/340$ tiles of size $32{\times}181 = 5792$ grid nodes each. On I1-2.5, we assign $t001$--$t008$, $t009$, and $t010$--$t012$ in the same order, yielding $192/24/72$ tiles of size $32{\times}201 = 6432$ grid nodes each. At the main-result missing ratio $\rho = 0.9$, an O1 test tile contains, on average, $580$ observed and $5212$ unobserved valid nodes, while an I1 test tile contains $644$ observed and $5788$ unobserved nodes. The per-tile mask is generated by drawing $\lfloor \rho\,|\Omega_{\mathrm{valid}}| \rfloor$ unobserved cells uniformly at random within the valid domain and treating the remaining valid cells as probed, which matches Assumption~\ref{asm:sampling} up to integer rounding.

\subsubsection{Baselines}
We compare against low-rank matrix completion (LRMC)~\cite{candes2009exact}, RadioUNet~\cite{levie2021radiounet}, PMNet~\cite{lee2023pmnet}, RadioGAT~\cite{li2024radiogat}, and RadioDiff~\cite{radiodiff2025}. All baselines share Anchor-CKM's observed pilot values, binary masks, transmitter-disjoint split, unobserved-valid-cell evaluator, and validation-tuned training budget. Side information is supplied only through supported input channels or graph features, and no method sees dense target values at inference. Our RadioDiff reproduction is a grid-domain conditional denoising diffusion model with cosine noise scheduling and denoising diffusion implicit model (DDIM) sampling using observation replacement. It tests whether a heavier generative prior is competitive at this data scale and pilot density.

\subsubsection{Evaluation Metrics}
For $\rho=0.9$, we report RMSE, mean absolute error (MAE), standardized-domain normalized mean-square error ($\mathrm{NMSE}_{z}$), and the 90th-percentile absolute error (P90) on unobserved valid nodes; dB improvements are quoted as absolute reductions. P90 captures upper-tail service holes, while Regret@25 and low-coverage F1/recall evaluate candidate-cell selection and coverage-hole flagging, respectively. Multi-target experiments report per-target standardized RMSE and their mean.

\subsubsection{Training Details}
Training uses dense DeepMIMO tiles; inference uses only sparse pilots, the mask, and the environmental prior. We use a two-phase curriculum, AdamW with cosine annealing, batch size $16$, and early stopping. In phase~2, Anchor-CKM uses $Q_{\mathrm{off}}=1024$ continuous off-grid samples per training tile, with loss weights $w_{\mathrm{obs}}=0.1$ and $\lambda_{\mathrm{unc}}=0.05$. Anchor-CKM O1 results are three-seed means over $\{42,123,7\}$ (RMSE std $0.113/0.102$~dB on O1-28/O1-60); baselines use best validation-tuned checkpoints under the same split, mask, and evaluator. The predictive-variance and noisy-pilot assessment tables use fixed evaluation masks and, unless stated, the seed-42 Anchor-CKM checkpoint. All models use PyTorch on NVIDIA A100 hardware.

Unless otherwise stated, the reported Anchor-CKM runs use latent width $C=32$, $L_F=3$ C-FNO layers, retained Fourier modes $(K_h,K_w)=(4,12)$, rank-$4$ spectral factorization, and a three-layer PConv densifier with $3\times3$ kernels and dilations $(1,2,1)$. The layout encoder uses three graph-attention layers with four heads and hidden width $32$, and the location-wise prediction head uses $L_{\mathrm{rff}}=12$ random Fourier frequencies, a $128$-wide residual MLP, and clipped log variance.

\subsection{Main Performance Analysis}
\label{sec:main_results}

\begin{figure*}[t]
    \centering
    \includegraphics[width=0.88\textwidth]{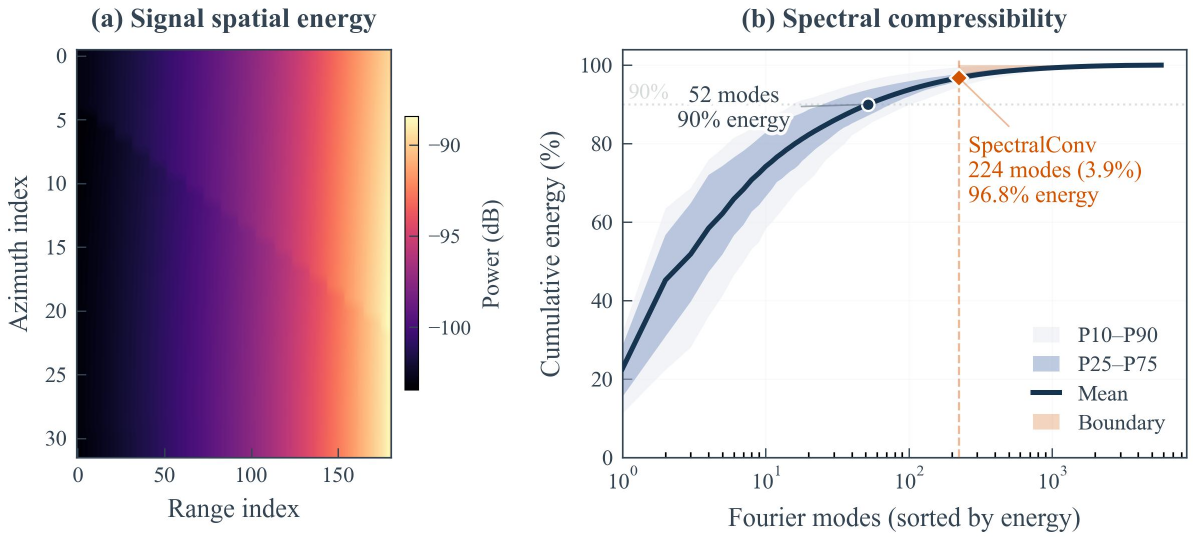}
    \caption{Fourier compressibility on O1-60: representative CKM and sorted-mode cumulative energy. Bands show tile percentiles; dashed line marks $90\%$ energy; orange marker is the C-FNO operating point.}
    \label{fig:fourier}
\end{figure*}

\subsubsection{Spectral Compressibility} Fig.~\ref{fig:fourier} shows that fewer than $52$ Fourier modes, under $1\%$ of coefficients, carry $90\%$ of O1-60 spatial-variation energy on average. The evaluated O1-60 fields are therefore not globally high-frequency; at $\rho=0.9$--$0.95$, the challenge is recovering low modes from mask-distorted observations while preserving local LOS/NLOS transitions. This supports the low-mode operator bias, with DtS still needed to stabilize pilot support.

\begin{table*}[!t]
\centering
\caption{Reconstruction accuracy at $\rho=0.9$.}
\label{tab:main}
\begin{minipage}{\textwidth}
\centering
\TableStyle
\setlength{\tabcolsep}{2.8pt}

\begin{tabularx}{\linewidth}{@{\hspace{\TableSidePad}}l *{8}{C}@{\hspace{\TableSidePad}}}
\toprule
\multirow{2}{*}{\textbf{Method}} & \multicolumn{4}{c}{\textbf{O1-28 (Outdoor)}} & \multicolumn{4}{c}{\textbf{O1-60 (Outdoor)}} \\
\cmidrule(lr){2-5} \cmidrule(lr){6-9}
& RMSE (dB) $\downarrow$ & MAE (dB) $\downarrow$ & $\mathrm{NMSE}_{z} \downarrow$ & P90 (dB) $\downarrow$ & RMSE (dB) $\downarrow$ & MAE (dB) $\downarrow$ & $\mathrm{NMSE}_{z} \downarrow$ & P90 (dB) $\downarrow$ \\
\midrule
LRMC & 7.288 & 5.797 & 1.000 & 12.216 & 9.329 & 7.593 & 1.000 & 15.117 \\
RadioUNet & 3.566 & 2.889 & 0.239 & 5.659 & 3.744 & 2.749 & 0.161 & 6.339 \\
RadioDiff & 13.662 & 10.984 & 3.515 & 22.540 & 16.931 & 13.548 & 3.295 & 28.068 \\
RadioGAT & 2.256 & 1.510 & 0.096 & 3.295 & 3.605 & 2.603 & 0.149 & 5.964 \\
PMNet & 2.045 & 1.557 & 0.079 & 3.371 & 1.781 & 1.229 & 0.036 & 2.997 \\
\rowcolor[gray]{0.95}
\textbf{Anchor-CKM (Ours)} & \textbf{0.720} & \textbf{0.476} & \textbf{0.010} & \textbf{1.015} & \textbf{0.988} & \textbf{0.663} & \textbf{0.011} & \textbf{1.377} \\
\bottomrule
\end{tabularx}

\vspace{4pt}

\begin{tabularx}{\linewidth}{@{\hspace{\TableSidePad}}l *{8}{C}@{\hspace{\TableSidePad}}}
\toprule
\multirow{2}{*}{\textbf{Method}} & \multicolumn{4}{c}{\textbf{I1-2.5 (Indoor)}} & \multicolumn{4}{c}{\textbf{Average (3 scenarios)}} \\
\cmidrule(lr){2-5} \cmidrule(lr){6-9}
& RMSE (dB) $\downarrow$ & MAE (dB) $\downarrow$ & $\mathrm{NMSE}_{z} \downarrow$ & P90 (dB) $\downarrow$ & RMSE (dB) $\downarrow$ & MAE (dB) $\downarrow$ & $\mathrm{NMSE}_{z} \downarrow$ & P90 (dB) $\downarrow$ \\
\midrule
LRMC & 0.486 & 0.393 & 1.000 & 0.852 & 5.701 & 4.594 & 1.000 & 9.395 \\
RadioUNet & 0.176 & 0.127 & 0.132 & 0.313 & 2.496 & 1.922 & 0.177 & 4.104 \\
RadioDiff & 1.307 & 1.039 & 7.216 & 2.161 & 10.633 & 8.524 & 4.675 & 17.590 \\
RadioGAT & 0.207 & 0.178 & 0.180 & 0.316 & 2.022 & 1.430 & 0.142 & 3.192 \\
PMNet & 0.308 & 0.275 & 0.401 & 0.450 & 1.378 & 1.021 & 0.172 & 2.273 \\
\rowcolor[gray]{0.95}
\textbf{Anchor-CKM (Ours)} & \textbf{0.165} & \textbf{0.124} & \textbf{0.115} & \textbf{0.260} & \textbf{0.624} & \textbf{0.421} & \textbf{0.046} & \textbf{0.884} \\
\bottomrule
\end{tabularx}
\TableNote{O1 Anchor-CKM: three-seed means (RMSE std $0.113$/$0.102$~dB); baselines: validation-selected single runs under the same split, masks, and evaluator. I1-2.5 uses an indirect prior; Average is the unweighted scenario mean for ranking only.}
\end{minipage}
\end{table*}

\subsubsection{Whole-Field Accuracy} Table~\ref{tab:main} reports the main comparison at $\rho=0.9$. O1-28 and O1-60 are the primary explicit-layout benchmarks; I1-2.5 probes portability under a weaker indirect prior.

Relative to PMNet, the strongest reproduced baseline, Anchor-CKM reduces RMSE from $2.045$ to $0.720$~dB on O1-28 and from $1.781$ to $0.988$~dB on O1-60. P90 drops from $3.371$ to $1.015$~dB on O1-28 and from $2.997$ to $1.377$~dB on O1-60. The P90 gains are important because upper-tail errors correspond to localized service holes, so the improvement is not confined to easy open-area cells. These gains are consistent with the intended ordering: geometry conditions a pilot-supported representation instead of substituting for the measurement constraint. RadioDiff underperforms under the same transmitter-disjoint split and budget, suggesting that this reproduction is mismatched to the present pilot density and data scale; diffusion priors may remain useful in denser or larger radio-map settings.

On I1-2.5, Anchor-CKM uses only the indirect prior yet attains the best RMSE and $\mathrm{NMSE}_{z}$. I1-2.5 has a different target scale and a smoother field, so the unweighted average is only a ranking summary. The narrower margin over RadioUNet ($0.165$ versus $0.176$~dB) should therefore be read as portability under a weaker prior, not as a gain comparable to the outdoor margins or as evidence that all baselines remain measurement-anchored under sparse measurements.

\subsubsection{Communication-Oriented Decision Metrics} Map accuracy matters when it supports downstream decisions. On the O1 test tiles and using the fixed seed-42 checkpoints, we evaluate two decision-oriented metrics: best-cell regret and low-coverage-region detection. Regret@25 measures how far the best ground-truth cell captured by the predicted top-$25$ candidate set falls below the true strongest unobserved cell. Low coverage is defined scenario-wise by the lower $10$th percentile of the unobserved-node received-power distribution, so the threshold adapts to the O1-28/O1-60 scales instead of importing a common dBm cutoff. For this detection task, we report the F1 score and recall.

\begin{table}[t]
\centering
\caption{Best-cell regret and low-coverage detection at $\rho=0.9$.}
\label{tab:decision_metrics}
\begin{minipage}{\SingleTableWidth}
\centering
\TableStyle
\setlength{\tabcolsep}{3.2pt}
\begin{tabularx}{\linewidth}{@{\hspace{\TableSidePad}}l l C C C@{\hspace{\TableSidePad}}}
\toprule
\textbf{Scenario} & \textbf{Method} & \textbf{Regret@25 (dB) $\downarrow$} & \textbf{F1 score $\uparrow$} & \textbf{Recall $\uparrow$} \\
\midrule
O1-28 & PMNet & 0.290 & 0.117 & 0.062 \\
\rowcolor[gray]{0.95}
O1-28 & Anchor-CKM & \textbf{0.170} & \textbf{0.788} & \textbf{0.982} \\
O1-60 & PMNet & 0.287 & 0.384 & 0.238 \\
\rowcolor[gray]{0.95}
O1-60 & Anchor-CKM & \textbf{0.214} & \textbf{0.879} & \textbf{0.931} \\
\bottomrule
\end{tabularx}
\TableNote{Low coverage: bottom $10\%$ of unobserved cells ($-115.50$/$-128.66$~dBm on O1-28/O1-60). Regret@25: true-best gap within the predicted top-$25$. Seed-$42$ O1 checkpoints.}
\end{minipage}
\end{table}

Table~\ref{tab:decision_metrics} shows that Anchor-CKM improves both decision-oriented metrics on O1-28/O1-60. The largest effect is in low-coverage detection: F1/recall increase from $0.117/0.062$ to $0.788/0.982$ on O1-28 and from $0.384/0.238$ to $0.879/0.931$ on O1-60. Regret@25 also decreases on both carriers, indicating that the predicted top-$25$ set is more likely to retain a strong fallback cell. The whole-tile accuracy improvement therefore appears together with better coverage-hole identification and candidate ordering, both relevant to beam fallback and handover protection.

\subsubsection{Boundary-Aware Error Analysis} We isolate LOS/NLOS transition behavior using an $8$\,m near-building band extracted from the O1 building footprints and pool valid unobserved band cells across transmitter-disjoint test tiles. Because only two tiles per carrier contain valid band nodes ($490$ on O1-60; $980$ pooled over O1-28/O1-60), this remains a localized error analysis, not a standalone benchmark.

\begin{figure*}[t]
\centering
\includegraphics[width=\textwidth]{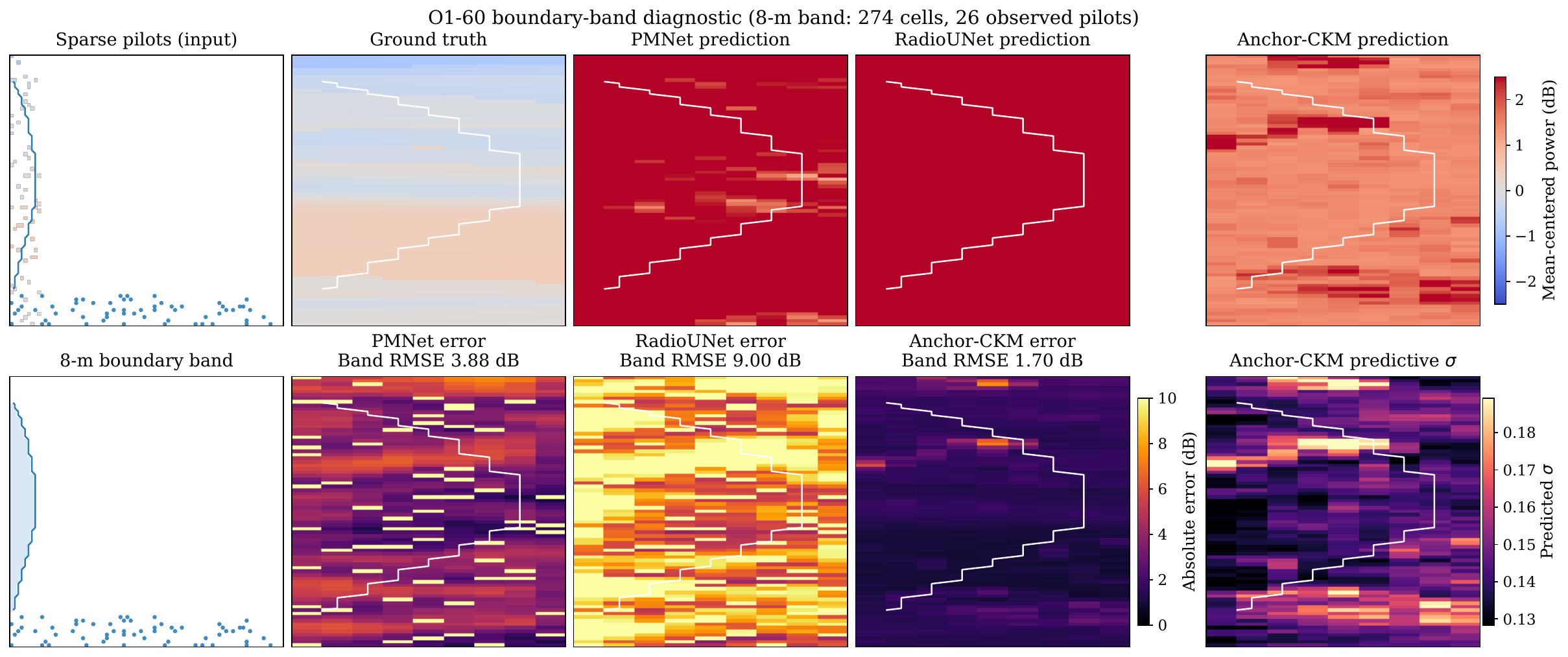}
\caption{Boundary-band error analysis on a representative O1-60 tile. Contour: 8-m near-building band; blue dots: pilots. Rows show mean-centered power, band/error maps, and Anchor-CKM predictive standard deviation; panel RMSE values are computed on the displayed tile, while pooled boundary-band RMSEs are reported in the text.}
\label{fig:near_boundary_diag}
\end{figure*}

Fig.~\ref{fig:near_boundary_diag} shows PMNet and RadioUNet smoothing the boundary contrast, while Anchor-CKM preserves a sharper transition and raises predictive standard deviation in the same region. On O1-60, boundary-band RMSE is $1.04$/$3.26$/$7.69$~dB for Anchor-CKM/PMNet/RadioUNet; removing the layout graph raises Anchor-CKM to $1.77$~dB. The pooled O1-28/O1-60 ordering is the same ($0.84$/$3.41$/$6.84$~dB). These results indicate a specific role for Stage~1: layout conditioning is not the main source of whole-field gain, but it helps localize regime transitions after the pilot-supported field has been formed. The predictive standard deviation is used here as a reliability cue around difficult boundaries; it is not evaluated as a separate edge detector.

\subsection{Ablation Studies}
\label{sec:ablation}

Table~\ref{tab:ablation} tests O1-60 ablations at $\rho=0.9$ under the full model's seed and geometry setting.

\begin{table}[!t]
\centering
\caption{Ablation on O1-60 at $\rho=0.9$.}
\label{tab:ablation}
\begin{minipage}{\SingleTableWidth}
\centering
\TableStyle
\begin{tabularx}{\linewidth}{@{\hspace{\TableSidePad}}Y c c@{\hspace{\TableSidePad}}}
\toprule
\textbf{Variant} & \textbf{RMSE (dB)} & \textbf{$\Delta$RMSE (dB)} \\
\midrule
\rowcolor[gray]{0.95}
\textbf{Anchor-CKM (Ours)} & \textbf{0.988} & \textbf{--} \\
\addlinespace[2pt]

\multicolumn{3}{l}{\textbf{Sparse measurements}} \\
\quad Without DtS & 1.622 & $+0.634$ \\
\addlinespace[2pt]

\multicolumn{3}{l}{\textbf{Operator/prediction head}} \\
\quad Without C-FNO & 1.414 & $+0.426$ \\
\quad Without Fourier coordinate features & 1.337 & $+0.349$ \\
\addlinespace[2pt]

\multicolumn{3}{l}{\textbf{Variance/layout}} \\
\quad MSE loss / no variance head & 1.271 & $+0.283$ \\
\quad With global FiLM & 1.143 & $+0.155$ \\
\quad Without layout graph & 1.389 & $+0.401$ \\
\addlinespace[2pt]

\multicolumn{3}{l}{\textbf{Layout graph}} \\
\quad Corner-only nodes & 1.102 & $+0.114$ \\
\quad Without edge types & 1.151 & $+0.163$ \\
\quad Without centroid links & 1.136 & $+0.148$ \\
\quad Without transmitter-visibility edges & 1.172 & $+0.184$ \\
\bottomrule
\end{tabularx}
\TableNote{$\Delta$RMSE is relative to full Anchor-CKM. Without DtS feeds $\mathbf{P}_{\mathrm{obs}}$ directly to C-FNO; MSE/no-variance uses MSE and removes the variance head. Same optimizer, schedule, masks, and seed.}
\end{minipage}
\end{table}

Removing DtS causes the largest degradation, increasing RMSE by $0.634$~dB; this identifies pilot-support stabilization as the bottleneck. Removing C-FNO or Fourier coordinate features is less damaging but still measurable, indicating that long-range propagation and continuous location-wise prediction matter after the support has been stabilized. The MSE/no-variance variant also degrades the mean estimate, suggesting that the heteroscedastic scoring rule acts as useful training regularization even though predictive variance is not part of the main accuracy objective. Layout variants have smaller effects than removing DtS, but they are consistent with the boundary study: spatial FiLM and graph structure mainly supply local correction near propagation-regime transitions. Overall, the O1-60 ablation is inconsistent with a purely layout-driven explanation of Table~\ref{tab:main}, because removing DtS is more harmful than removing any single layout component.

\subsection{Robustness and Deployment-Oriented Assessment}
\label{sec:robustness}

\subsubsection{Missing-Ratio Robustness} We evaluate Anchor-CKM, PMNet, RadioUNet, and RadioGAT at missing ratios $0.3$, $0.5$, $0.7$, $0.9$, and $0.95$. Following Remark~\ref{rem:random_sampling}, this random-probing sweep approximates opportunistic reporting, blockage, and trajectory-induced gaps; it is not a full mobility-trace model.

\begin{figure*}[t]
\centering
\subfloat[RMSE under varying missing ratios.\label{fig:robust_rmse}]{%
    \includegraphics[width=0.485\textwidth]{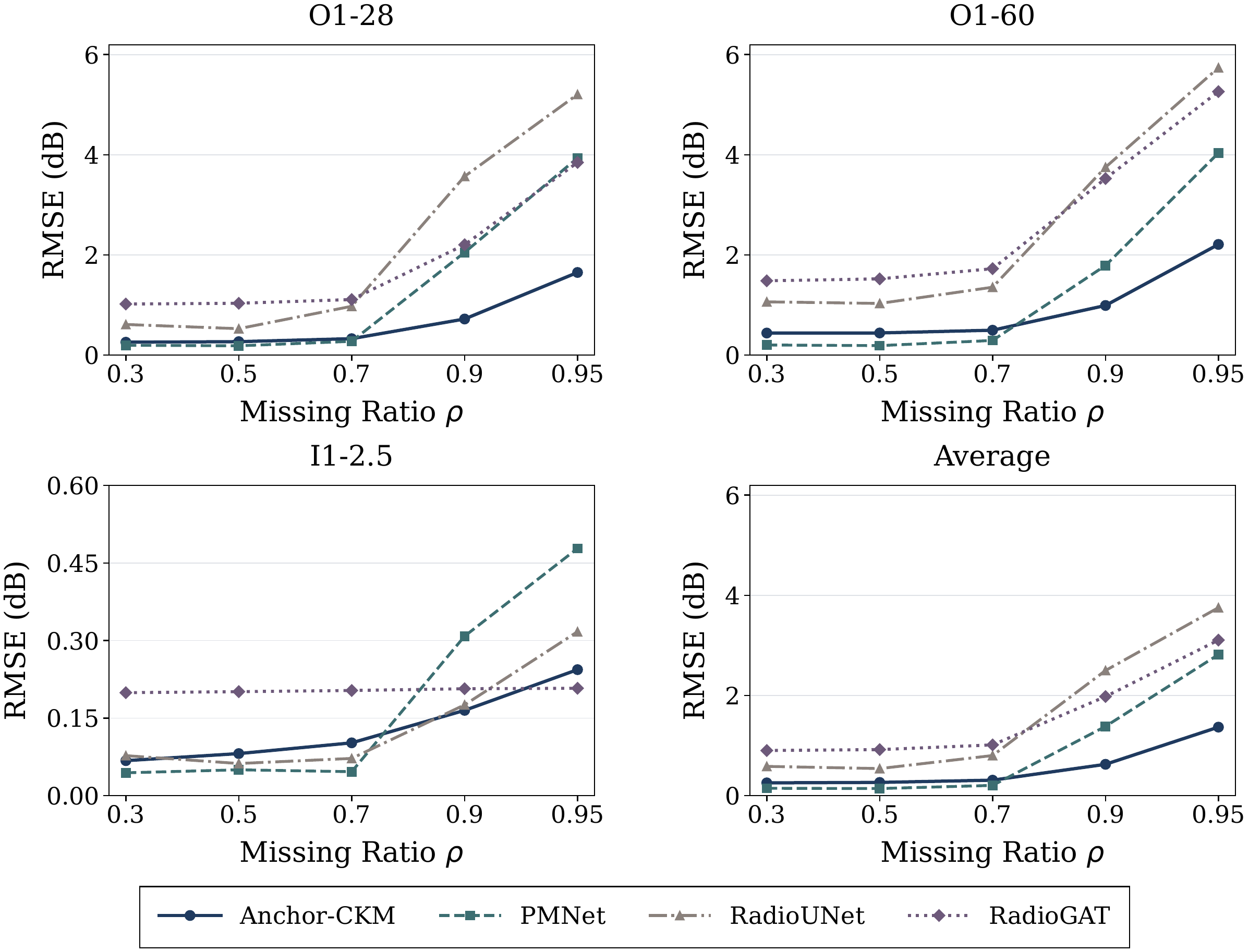}}
\hfil
\subfloat[P90 under varying missing ratios.\label{fig:robust_p90}]{%
    \includegraphics[width=0.485\textwidth]{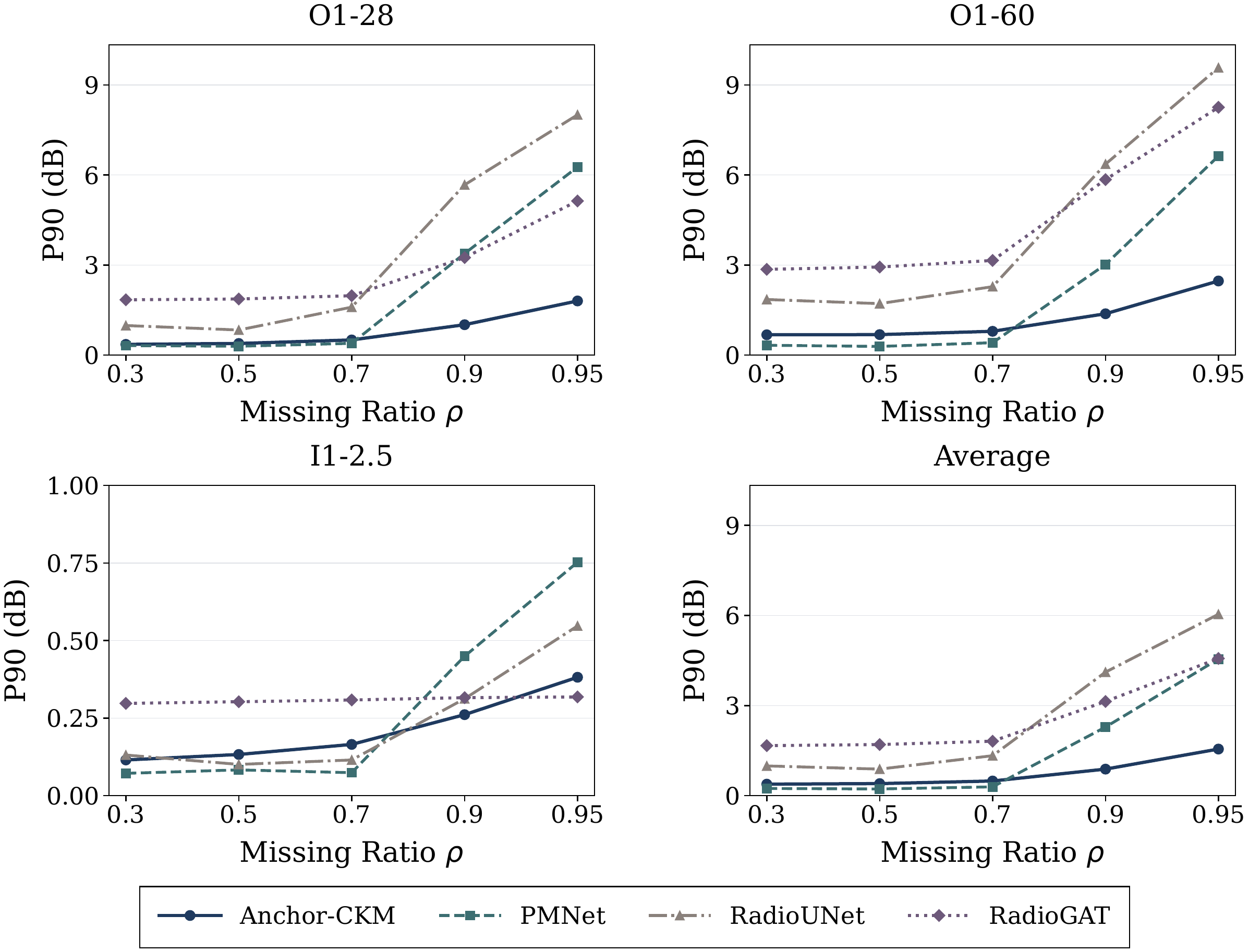}}
\caption{Missing-ratio robustness. RMSE and P90 are computed on unobserved valid cells for each scenario and the unweighted average.}
\label{fig:robustness_curves}
\end{figure*}

Fig.~\ref{fig:robustness_curves} shows that Anchor-CKM degrades more gradually than the reproduced deep baselines as the probing budget shrinks. At $\rho=0.95$, PMNet and RadioUNet exhibit sharply rising P90 errors, whereas Anchor-CKM's upper tail grows more slowly, consistent with DtS stabilizing the mask-distorted support before operator propagation. The P90 behavior is especially relevant for CKM refresh from sparse measurements because isolated high-error pockets can remain hidden when only average RMSE is monitored.

RadioGAT's nearly flat I1-2.5 curve should be interpreted cautiously because the weaker indirect prior may encode strong coordinate and topology regularity. We therefore interpret I1-2.5 primarily as a portability check under a weaker prior, not as evidence that all baselines remain measurement-anchored indoors.

\subsubsection{Predictive-Variance Assessment and Noisy Pilots} Table~\ref{tab:uncertainty_noise} summarizes deployment-oriented predictive-variance assessment under the fixed seed-42 O1 protocol. We report expected calibration error (ECE), $90\%$ interval coverage, and Corr$(|e|,\hat{\sigma})$ only as deployment-side indicators of the variance head, not as a full uncertainty-method comparison. On O1-28 and O1-60, ECE values are $0.056$ and $0.078$, and nominal $90\%$ intervals cover $93.7\%$ and $95.8\%$ of evaluation nodes. Together with Corr$(|e|,\hat{\sigma})$ values of $0.301$ and $0.137$, these quantities indicate useful but imperfect variance estimates; the O1-60 correlation is weaker, while the intervals remain mildly conservative without being overconfident.

\begin{table}[t]
\centering
\caption{Predictive-variance assessment and noisy-pilot robustness on O1 at $\rho=0.9$.}
\label{tab:uncertainty_noise}
\begin{minipage}{\SingleTableWidth}
\centering
\TableStyle
\setlength{\tabcolsep}{6pt}
\begin{tabularx}{\linewidth}{@{\hspace{\TableSidePad}}Y c c@{\hspace{\TableSidePad}}}
\toprule
\textbf{Metric} & \textbf{O1-28} & \textbf{O1-60} \\
\midrule
\multicolumn{3}{@{\hspace{\TableSidePad}}l}{Predictive-variance assessment} \\
\quad ECE $\downarrow$ & 0.056 & 0.078 \\
\quad 90\% interval coverage $\uparrow$ & 0.937 & 0.958 \\
\quad Corr$(|e|,\hat{\sigma})$ $\uparrow$ & 0.301 & 0.137 \\
\addlinespace[2pt]
\multicolumn{3}{@{\hspace{\TableSidePad}}l}{Noisy-pilot RMSE (dB) $\downarrow$} \\
\quad Anchor-CKM, $\sigma_{\mathrm{noise}}=0$ dB & 0.720 & 0.988 \\
\quad Anchor-CKM, $\sigma_{\mathrm{noise}}=3$ dB & 1.442 & 1.364 \\
\quad PMNet, $\sigma_{\mathrm{noise}}=3$ dB & 2.234 & 2.024 \\
\bottomrule
\end{tabularx}
\TableNote{ECE uses $15$ bins; coverage is nominal $90\%$; pilot noise is additive Gaussian in dB.}
\end{minipage}
\end{table}

Adding $3$-dB Gaussian pilot noise increases Anchor-CKM RMSE from $0.720$ to $1.442$~dB on O1-28 and from $0.988$ to $1.364$~dB on O1-60, but the margin over PMNet at the same noise level is preserved. Combined with the reported correlations, this supports using the predictive variance as a secondary deployment-side indicator rather than a primary accuracy mechanism.

\subsubsection{Multi-Target Reconstruction} To verify that Anchor-CKM is not restricted to power reconstruction, we evaluate a three-target CKM setting on O1-28 at $\rho=0.9$ with received power, AoA azimuth (AoA-Az), and AoA elevation (AoA-El). All methods are retrained for this setting, and Table~\ref{tab:multiparam} reports the standardized RMSE of each target together with their mean.

\begin{table}[t]
\centering
\caption{Three-target standardized RMSE on O1-28 at $\rho=0.9$.}
\label{tab:multiparam}
\begin{minipage}{0.99\columnwidth}
\centering
\TableStyle
\setlength{\tabcolsep}{3.6pt}
\begin{tabularx}{\linewidth}{@{\hspace{\TableSidePad}}l C C C C@{\hspace{\TableSidePad}}}
\toprule
\textbf{Method} & \textbf{Mean} & \textbf{Power} & \textbf{AoA-Az} & \textbf{AoA-El} \\
\midrule
LRMC & 0.985 & 0.930 & 1.012 & 1.012 \\
RadioUNet & 0.372 & 0.337 & 0.552 & 0.226 \\
PMNet & 0.270 & 0.230 & 0.390 & 0.192 \\
RadioGAT & 0.876 & 1.165 & 0.987 & 0.476 \\
RadioDiff & 2.041 & 2.179 & 2.055 & 1.888 \\
\rowcolor[gray]{0.95}
\textbf{Anchor-CKM} & \textbf{0.125} & \textbf{0.117} & \textbf{0.157} & \textbf{0.101} \\
\bottomrule
\end{tabularx}
\TableNote{Entries are standardized RMSE (lower is better), using each target's training-set mean and standard deviation.}
\end{minipage}
\end{table}

Anchor-CKM has the lowest mean standardized RMSE ($0.125$ versus PMNet's $0.270$) and improves every target, including AoA-Az and AoA-El. On O1-28, this indicates that the stabilized propagated field captures propagation structure useful for angular targets as well; transfer across more scenarios is left to future work.

\section{Conclusion}
\label{sec:conclusion}

We presented Anchor-CKM, a measurement-first, knowledge-aided framework for CKM reconstruction from sparse measurements and coarse environmental knowledge. In stringent low-coverage regimes, zero-filled pilots can form a mask-dominated input rather than a partially observed radio field: mask-induced zeros disrupt local neighborhoods and alias the mask spectrum with the channel spectrum. Anchor-CKM separates pilot-support stabilization, layout-conditioned spectral refinement, and continuous location-wise prediction, so environmental knowledge and spectral operators act only after a pilot-supported representation has been formed.

On the explicit-layout outdoor DeepMIMO scenarios, Anchor-CKM reduces received-power RMSE by $0.79$--$1.33$~dB relative to the strongest reproduced baseline. The indoor I1-2.5 result is best interpreted as a portability test under a weaker indirect prior, where Anchor-CKM obtains the lowest RMSE with a smaller margin. Ablations identify pilot-support stabilization as the largest observed contributor under the tested ablation setting, show that layout conditioning improves LOS/NLOS boundary reconstruction, and indicate that predictive variance provides a secondary reliability cue near coverage-hole and noisy-pilot regions. These observations support a measurement-first, knowledge-aided ordering for CKM maintenance from sparse measurements.

The current implementation still assumes uniform random probing and relies on quasi-static layout information. Future work should test whether the same ordering extends to trajectory-driven sampling, dynamic environments, multi-frequency or three-dimensional CKMs, and closed-loop resource management.

\bibliographystyle{IEEEtran}
\bibliography{refs}

\end{document}